    \definecolor{urlcolor}{rgb}{0,.145,.698}
    \definecolor{linkcolor}{rgb}{.71,0.21,0.01}
    \definecolor{citecolor}{rgb}{.12,.54,.11}
    \definecolor{ansi-black}{HTML}{3E424D}
    \definecolor{ansi-black-intense}{HTML}{282C36}
    \definecolor{ansi-red}{HTML}{E75C58}
    \definecolor{ansi-red-intense}{HTML}{B22B31}
    \definecolor{ansi-green}{HTML}{00A250}
    \definecolor{ansi-green-intense}{HTML}{007427}
    \definecolor{ansi-yellow}{HTML}{DDB62B}
    \definecolor{ansi-yellow-intense}{HTML}{B27D12}
    \definecolor{ansi-blue}{HTML}{208FFB}
    \definecolor{ansi-blue-intense}{HTML}{0065CA}
    \definecolor{ansi-magenta}{HTML}{D160C4}
    \definecolor{ansi-magenta-intense}{HTML}{A03196}
    \definecolor{ansi-cyan}{HTML}{60C6C8}
    \definecolor{ansi-cyan-intense}{HTML}{258F8F}
    \definecolor{ansi-white}{HTML}{C5C1B4}
    \definecolor{ansi-white-intense}{HTML}{A1A6B2}
    \definecolor{incolor}{rgb}{0.0, 0.0, 0.5}
    \definecolor{outcolor}{rgb}{0.545, 0.0, 0.0}
\newcommand{\gcap}[1]{\bar{\Phi}\paren{#1}}
\newcommand{\ggcap}[1]{{\Phi}\paren{#1}}
\newcommand{\vect}[1]{{\sf v}_{#1}}
\newcommand{\vectu}[1]{{\sf u}_{#1}}
\newcommand{\vectone}{\vect{\emptyset}}
\theoremstyle{plain}
\newtheorem{theorem}{Theorem}
\newtheorem{proposition}[theorem]{Proposition}
\newtheorem{sdp}[theorem]{SDP}
\newtheorem{lemma}[theorem]{Lemma}
\newtheorem{claim}[theorem]{Claim}
\newtheorem{corollary}[theorem]{Corollary}
\newtheorem{definition}[theorem]{Definition}
\newtheorem{fact}[theorem]{Fact}
\newtheorem{observation}[theorem]{Observation}
    \definecolor{urlcolor}{rgb}{0,.145,.698}
    \definecolor{linkcolor}{rgb}{.71,0.21,0.01}
    \definecolor{citecolor}{rgb}{.12,.54,.11}
\newcommand{\savehyperref}[2]{\texorpdfstring{\hyperref[#1]{#2}}{#2}}
\newcommand{\Sref}[1]{\hyperref[#1]{\S\ref*{#1}}}
\renewcommand{\leq}{\leqslant}
\renewcommand{\geq}{\geqslant}
\newcommand{\paren}[1]{\left(#1 \right )}
\newcommand{\set}[1]{\left\{#1\right\}}
\newcommand{\abs}[1]{\left\lvert#1\right\rvert}
\newcommand{\Abs}[1]{\left\lvert#1\right\rvert}
\newcommand{\floor}[1]{\left\lfloor #1 \right\rfloor}
\newcommand{\norm}[1]{\left\lVert#1\right\rVert}
\newcommand{\defeq}{\stackrel{\textup{def}}{=}}
\newcommand{\assign}{:=}
\newcommand{\inprod}[1]{\left\langle #1\right\rangle}
\newcommand{\card}{\abs}
\DeclareMathOperator*{\ProbOp}{\sf Pr}
\newcommand{\Prob}[2]{\underset{#1}{\ProbOp}\left[#2\right]}
\newcommand{\Ex}[2]{\underset{#1}{\E}\left[#2\right]}
\newcommand{\eps}{\varepsilon}
\definecolor{DSgray}{cmyk}{0,0,0,0.7}
\newcommand{\cN}{N}
\title{Improved linearly ordered colorings of hypergraphs via SDP rounding}
\author{
	Anand Louis
	\footnote{Supported in part by SERB Award CRG/2023/002896 and the Walmart Center for Tech Excellence at IISc (CSR Grant WMGT-23-0001).}\\
	Indian Institute of Science, Bengaluru\\
	\href{mailto:anandl@iisc.ac.in}{anandl@iisc.ac.in}
	\and
	Alantha Newman\\
	CNRS and Universit\'e Grenoble Alpes\\
	\href{mailto:alantha.newman@grenoble-inp.fr}{alantha.newman@grenoble-inp.fr}
	\and
	Arka Ray\footnote{Supported in part by the Walmart Center for Tech Excellence at IISc (CSR Grant WMGT-23-0001).}\\
	Indian Institute of Science, Bengaluru\\
	\href{mailto:arkaray@iisc.ac.in}{arkaray@iisc.ac.in}
	}
\date{}
\begin{document}
\maketitle

\begin{abstract}
	
We consider the problem of {\em linearly ordered} (LO) coloring of
hypergraphs.  A hypergraph has an LO coloring if there is a vertex
coloring, using a set of ordered colors, so that (i) no edge is
monochromatic, and (ii) each edge has a unique maximum color.  It is
an open question as to whether or not a 2-LO colorable 3-uniform
hypergraph can be LO colored with 3 colors in polynomial time.
Nakajima and \v{Z}ivn{\'{y}} recently gave a polynomial-time algorithm to
color such hypergraphs with $\widetilde{O}(n^{1/3})$ colors and asked
if SDP methods can be used directly to obtain improved bounds.  Our
main result is to show how to use SDP-based rounding methods to
produce an LO coloring with $\widetilde{O}(n^{1/5})$ colors for such
hypergraphs.  We show how to reduce the problem to cases
with highly structured SDP solutions, which we call {\em balanced}
hypergraphs.  Then we discuss how to apply classic SDP-rounding tools in
this case to obtain improved bounds.  



\end{abstract}


\section{Introduction}

Approximate graph coloring is a well-studied ``promise'' optimization
problem.  Given a simple graph $G=(V,E)$ that is promised to be
$k$-colorable, the goal is to find a coloring of $G$ using the minimum
number of colors.  A (proper) {\em coloring} is an assignment of
colors, which can be represented by positive integers, to the vertices
of $G$ so that for each edge $ij$ in $G$, the vertices $i$ and $j$ are
assigned different colors.  The most popular case of this problem is
when the input graph is promised to be $3$-colorable.  Even with this
very strong promise, the gap between the upper and lower bounds are
quite large: the number of colors used by the state-of-the-art
algorithm is $\widetilde O\paren{n^{0.19996}}$~\cite{KT17}, while it
is $\NP$-hard to color a 3-colorable graph with 5
colors~\cite{BBKO21}.  There is also super constant hardness
conditioned on assumptions related to the Unique Games
Conjecture~\cite{dinur2006conditional}.  More generally, when we are
promised that the graph $G$ is $k$-colorable, it is NP-hard to color
it using $\binom{k}{\floor{k/2}}-1$ colors~\cite{WZ20}.  Regarding
upper bounds, we note that almost all algorithms for coloring 3-colorable
graphs use some combination of semidefinite programming (SDP) and
combinatorial tools~\cite{KMS98,arora2006new,KT17}.

Approximate hypergraph coloring is a natural generalization of the
above problem to hypergraphs.  Here, we want to assign each vertex a
color such that there are no monochromatic edges, while using the
minimum number of colors.  In the case of hypergraph coloring, we know
that for every pair of constants $\ell \geq k\geq 2$, it is NP-hard to
$\ell$-color a $k$-colorable 3-uniform hypergraph~\cite{DRS05}.  Even
in the special case, when the 3-uniform hypergraph is promised to be
2-colorable, there is a large gap between the best algorithm, which
uses at most $\widetilde O\paren{n^{1/5}}$ 
colors~\cite{KNS01,alon1996coloring,chen1996coloring} and the
aforementioned (super constant) lower bound.

In this paper, we study a variant of the hypergraph coloring problem
known as {\em linearly ordered coloring}, introduced in several
different contexts by
\cite{katchalski1995ordered,cheilaris2011graph,BBB21}.  A linearly
ordered (LO) $k$-coloring of an $r$-uniform hypergraph assigns an
integer from $\set{1, \dots , k}$ to every vertex so that, in each
edge in the hypergraph, there is a unique vertex assigned the maximum color in the 
(multi)set of colors for that edge.  Recently, there
has been a renewed interest in studying this problem.  This is because
this problem constitutes a gap in the understanding of the complexity
of an important class of problems called {\em promise constraint
  satisfaction problems} (PCSPs).  To elaborate, ~\cite{FKOS19, BG21}
classified the complexity of all (symmetric) PCSPs on the binary
alphabet, showing that these problems are either polynomial-time
solvable or \NP-complete.  Subsequently, \cite{BBB21} gave a complete
classification for PCSPs of the form: given a 2-colorable 3-uniform
hypergraph, find a 3-coloring.  Here, the notion of ``coloring'' can
have several definitions.  As highlighted by \cite{BBB21}, the only
PCSP of this type whose complexity is unresolved is that of
determining whether a 3-uniform hypergraph is 2-LO colorable or is not
even 3-LO colorable.  In contrast, it was recently shown that it is
\NP-complete to decide if a 3-uniform hypergraph is 3-LO colorable or
not even 4-LO colorable~\cite{FNOTW24}.

The work~\cite{NZ22} addresses the corresponding optimization problem
by giving an algorithm to compute an LO coloring using at most
$\widetilde O\paren{n^{1/3}}$ colors for a 2-LO colorable 3-uniform
hypergraph. \cite{NZ22} leave open the question of finding an LO
coloring for such a hypergraph using fewer colors.  Moreover, they
state that they do not know how to directly use SDP-based
methods\footnote{However, they do use~\cite{Hal00} which is an
  indirect use of SDP-based methods.} and remark that SDP-based
approaches seem ``less suited for LO colorings''.  In this paper, one of
our main contributions is to show how to use SDP relaxations to give an
improved bound for coloring such hypergraphs.  Our main result
improves this bound significantly by using at most $\widetilde O
\paren{n^{1/5}}$ colors to LO color a 2-LO colorable 3-uniform
hypergraph.
\begin{restatable}{theorem}{loColor}
\label{thm:lo-color}
Let $H$ be a 2-LO colorable 3-uniform hypergraph on $n$ vertices.
Then there exists a (randomized) polynomial-time algorithm that finds
an LO coloring of $H$ using \ $\widetilde O\paren{n^{1/5}}$ colors.
\end{restatable}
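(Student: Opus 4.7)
The plan is to set up a natural SDP relaxation of LO $2$-coloring for a 3-uniform hypergraph $H$, reduce to a structured ``balanced'' class of SDP solutions by iterative peeling, round balanced instances using classical Gaussian and vector-partition techniques in the spirit of Karger--Motwani--Sudan and Krivelevich--Nathaniel--Sudakov, and then recurse on the residual hypergraph. Concretely, I would associate a unit vector $v_i$ to each vertex $i$, a distinguished vector $v_0$ marking the ``top/bottom'' axis, and encode the fact that in a 2-LO coloring every edge $e = \set{a,b,c}$ has exactly one top vertex and two bottom vertices via the linear constraint $\sum_{i\in e} \langle v_i, v_0 \rangle = -1$ together with the standard triangle-type constraints for $\pm 1$-valued variables.

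The first phase, reducing to \emph{balanced} instances, proceeds by peeling. Call a vertex $i$ top-heavy if $\langle v_i, v_0 \rangle$ is bounded away from $-1$; such vertices locally look like candidates for the top color. Assigning the top-heavy vertices to a small number of color classes---using a secondary coloring on the sub-hypergraph of edges for which they can serve as transversals---either kills a constant fraction of the vertices at $\widetilde{O}(1)$ color cost or leaves a sub-hypergraph on which $\abs{\langle v_i, v_0\rangle}$ is bounded away from $1$ for every remaining vertex, i.e., the SDP solution is balanced. After a logarithmic number of peeling rounds we may therefore assume that the sub-hypergraph on the surviving vertices is balanced in this sense.

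In the balanced case I would apply KNS-style $n^{1/5}$ rounding. Balance guarantees that a nontrivial component of each $v_i$ lies orthogonal to $v_0$, so that a random Gaussian projection yields a hyperplane partition in which both sides are large and, thanks to the SDP constraint $\sum_{i\in e} \langle v_i, v_0\rangle = -1$, most edges are split in the desired $2{:}1$ pattern with the top color on the singleton side. Edges that round \emph{badly}---either monochromatic or with the top vertices on the wrong side---can be covered by a small vertex cover produced by an SDP-based independent-set/vertex-cover argument; using the balance hypothesis to lower-bound the SDP discrepancy, the cover can be shown to have size $\widetilde{O}(n^{4/5})$. Coloring the two projected sides with $\widetilde{O}(1)$ LO colors and recursing on the cover gives the recurrence $T(n) \le \widetilde{O}(1) + T\!\paren{\widetilde{O}(n^{4/5})}$, which unrolls to $\widetilde{O}(n^{1/5})$ colors.

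The main technical obstacle I expect is that, unlike ordinary hypergraph $2$-coloring, LO $2$-coloring requires the \emph{asymmetry} of the color roles: the top color must appear exactly once on each edge. The SDP must encode this asymmetry (via $v_0$ and the linear edge constraint) and, more delicately, the rounding must preserve it---it is not enough to avoid monochromatic edges, the split must respect the unique-top direction, and the color classes produced on either side of the hyperplane and on the residual cover must compose into a single linearly ordered palette. Ensuring that the peeling reduction, the Gaussian rounding, and the vertex-cover recursion all interact consistently with this linear order is, I anticipate, where the bulk of the technical work will lie.
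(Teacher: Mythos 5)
Your high-level architecture matches the paper: an SDP relaxation with a distinguished vector $\vectone$ and the edge constraint $\vect{a}+\vect{b}+\vect{c}=-\vectone$; a ``peel off the unbalanced vertices'' reduction to the balanced case; and KNS-style Gaussian rounding on what remains. The first phase is essentially right in spirit, though the paper does it in one shot (a bisection on the $\gamma_a$ values colors every unbalanced vertex with $O(\log(1/\eps))$ colors, each color class being an odd independent set) rather than by iterated ``constant-fraction'' peeling.

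However, your analysis of the balanced phase has a genuine gap. You claim a vertex cover of the badly-rounded edges of size $\widetilde{O}(n^{4/5})$, recurse on the cover, and write the recurrence $T(n)\le \widetilde{O}(1)+T\paren{\widetilde{O}(n^{4/5})}$, asserting it unrolls to $\widetilde{O}(n^{1/5})$. It does not: that recurrence shrinks $n$ geometrically in the exponent and would give $\widetilde{O}(\log\log n)$ colors, far better than $n^{1/5}$ --- a sign that the $n^{4/5}$ cover bound is too optimistic and unjustified. What the SDP rounding actually delivers (and what the paper proves) is an \emph{odd independent set} of size $\widetilde{\Omega}(m/\Delta^{1/3})$, which one colors with a single new largest color and removes; the residual shrinks \emph{additively}, so the right recurrence is $T(m)\le 1 + T\paren{m-\widetilde{\Omega}(m^{4/5})}$, giving $\widetilde{O}(m^{1/5})$ colors via the standard Blum-style progress lemma. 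You also omit the degree dichotomy that makes this work: the SDP/Gaussian rounding only produces a large odd independent set when the average degree $\Delta$ is small (the paper uses it for $\Delta\le m^{3/5}$), while in the high-degree regime one needs a different, combinatorial argument that extracts a large \emph{even} independent set of size $\Omega(\sqrt{m\Delta})$ from the linear-hypergraph structure (Proposition 11 of \cite{NZ22}). Balancing the two cases at $\Delta=m^{3/5}$ is what yields $f(m)=\widetilde{\Omega}(m^{4/5})$. Finally, the even/odd distinction is also what resolves the compositional LO issue you correctly flag: odd independent sets get a fresh strictly-largest color, even independent sets get a fresh strictly-smallest color, and this is precisely how partial colorings are stitched into a single linear order.
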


The SDP relaxation that we use is similar to the natural SDP used in
the case of 2-colorable 3-uniform hypergraphs~\cite{KNS01}. In fact,
the upper bound on the number of colors used in
\prettyref{thm:lo-color} is the same as the upper bound given by
\cite{KNS01} to color 2-colorable 3-uniform hypergraphs.  It is the
same SDP used by \cite{brakensiek2023sdps} who show that a
straightforward hyperplane rounding algorithm yields a solution
to the PCSP {\bf{(\sc{1-in-3-SAT}, \sc{NAE-3-SAT})}}, in which
we are given a satisfiable instance of the first problem and we want to find a
feasible solution for the second.  Notice that a satisfiable
{\bf{(\sc{1-in-3-SAT})}} instance on all positive literals is exactly
a 2-LO colorable 3-uniform hypergraph.


\paragraph*{General Framework for (Hyper)Graph Coloring.}
Most algorithms for coloring graphs and hypergraphs proceed
iteratively, producing a partial coloring of the remaining (uncolored)
vertices at each step.  This was formalized by \cite{blum1994new},
following \cite{Wig83}.  The goal is to color a significant number of
vertices with few colors in each step, ensuring that the number of
iterations and therefore, the overall number of colors used, is small.
Typically, in each step, the method used to color the vertices is
chosen according to the degree of the graph (or hypergraph) induced on
the remaining vertices.  In particular, if the induced graph (or
hypergraph) has a low degree, then most algorithms use an SDP-based
method to find a large independent set, which can be assigned a single
color~\cite{KMS98,blum1997n,arora2006new}.  The algorithm for
LO-coloring presented in \cite{NZ22}, as well as ours, uses this
general framework, except that in \cite{NZ22}, they did not use an
SDP-based method directly, and instead used~\cite{Hal00} to find a large
independent set.  The improved upper bound on the number of colors
output by our algorithm comes from using an SDP and rounding methods
tailored to LO coloring.

\paragraph*{Overview of our SDP-Based Approach.}
As noted, we first solve a natural SDP relaxation for 2-LO coloring.
Then our rounding proceeds in two steps.  In the first step we look at
the projection of the vectors to a particular special vector (the
vector $\vect{\emptyset}$ in \autoref{sdp:lo_coloring}) from the
solution of the SDP, which signifies the color that is unique in all
edges in the promised 2-LO coloring.  For each of the three vertices
in an edge, all three of the corresponding vectors can have a
projection onto this special vector with roughly the same value (a
{\em balanced} edge), or they can have very different values (an {\em
  unbalanced} edge).  It is also possible to classify vertices into
balanced and unbalanced (see \autoref{def:balanced} for formal
definitions) so that balanced edges contain only balanced vertices.
We use a combinatorial rounding procedure to color all the unbalanced
vertices with a small number of colors, leaving only a balanced
(sub)hypergraph to be colored. Since this number of colors is much
smaller than the bound stated in \prettyref{thm:lo-color}, this can be
viewed as a reduction of the problem to the balanced case.  To the
best of our knowledge, this rounding method is not present in previous
works on LO-coloring and thus, this tool can be considered a main
contribution of this paper.  We note that~\cite{KNS01} showed that the
vectors can be ``bucketed'' with respect to their projection onto a
special vector, and used a simple argument to show that there is a
large bucket on which they can focus. Our approach allows us to focus
on a single bucket containing vectors with projection $\approx -1/3$
with the special vector, which have useful geometric properties.

In the second step we color the hypergraph containing the balanced
edges.  In this step, we produce (following \cite{NZ22}) an ``even''
independent set or an ``odd'' independent set at each round.  An {\em
  even independent set} is one which intersects each hyperedge two or
zero times, while an {\em odd independent set} intersects each
hyperedge one or zero times.  To find an even independent set, we use
the same approach used by \cite{NZ22}.  To find an odd independent
set, we use a variant of the standard threshold rounding for a
coloring SDP~\cite{Hal00, KNS01}. As in~\cite{KNS01} rather than use
the vectors output by the SDP solution, we use a modified set of
vectors, which have properties useful to obtain better bounds from the
threshold rounding.  Specifically, the set consists of the normalized
projections of the vectors from the SDP solution onto the space
orthogonal to the special vector; in the balanced case, the special
vector seems to provide no information that is useful to construct a
coloring.  Combining all the colorings requires some technical care,
since we need to always maintain an LO coloring, but it can be done
and some of the work has already been done in \cite{NZ22}.


\paragraph*{Update on Independent and Subsequent Work.}  After the
initial conference submission of our paper, the work \cite{hmnz24}
appeared on the arXiv.  The second version appeared after we posted
our paper to arXiv and pointed out that in fact we do not need to consider
the balanced case.  Indeed, the observation in Section 3 of
\cite{hmnz24} can be interpreted as giving an alternative and better
SDP rounding in the balanced case, directly reducing the balanced case
to the unbalanced case.  We discuss this more at the end of \prettyref{sec:sdpRound}.

\section{Tools for LO Coloring and Proof of the Main Theorem} \label{sec:overview}

In this section, we give an overview of our approach to color a 2-LO
colorable 3-uniform hypergraph $H=(V,E)$ with few colors.  Following
\cite{NZ22}, we assume that the input hypergraph $H$ is a {\em linear
  hypergraph}, which is defined as follows.
\begin{definition}
A 3-uniform hypergraph is
linear if every pair of edges intersects in at most one vertex.
\end{definition}
This is not a restriction because we can construct an equivalent
3-uniform hypergraph.
\begin{proposition}[Proposition 3 in \cite{NZ22}]
\label{prop:linear-reduction}
There is a polynomial-time algorithm that, if given an 2-LO colorable
3- uniform hypergraph $H$, constructs an 2-LO colorable linear
3-uniform hypergraph $H'$ with no more vertices than $H$ such that, if
given an LO $k$-colouring of $H'$, one can compute in polynomial time
an LO $k$-colouring of H.
\end{proposition}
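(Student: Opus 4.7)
The plan is to iteratively identify pairs of vertices whose colors must coincide in every 2-LO coloring of $H$, producing a quotient hypergraph on no more vertices than $H$ in which no two edges share two vertices.

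The key tool is a forcing lemma: if $\{a,b,c\}$ and $\{a,b,d\}$ are both edges of a 2-LO colorable 3-uniform hypergraph, then in every 2-LO coloring $\chi$ we have $\chi(c) = \chi(d)$. To see this, recall that a 2-LO coloring of a 3-uniform edge assigns color $2$ to exactly one vertex and color $1$ to the other two. Case on the unique max of $\{a,b,c\}$: if it is $a$ (symmetrically, $b$), then $\chi(b) = 1$ forces the max of $\{a,b,d\}$ to be $a$ as well, so $\chi(d) = 1 = \chi(c)$; if the max is $c$, then $\chi(a) = \chi(b) = 1$, so the max of $\{a,b,d\}$ must be $d$, giving $\chi(d) = 2 = \chi(c)$.

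Based on this, the algorithm is: while $H$ contains two edges sharing two vertices, pick such a pair $\{a,b,c\}, \{a,b,d\}$ and contract $c$ with $d$ into a single vertex $c'$, deleting duplicate edges. Each contraction strictly decreases the vertex count, so the procedure terminates in polynomial time, and by construction the output $H'$ is linear and 3-uniform. For coloring transfer, let $\phi \colon V(H) \to V(H')$ be the quotient map. The forcing lemma ensures that any 2-LO coloring of $H$ descends to a well-defined 2-LO coloring of $H'$, so $H'$ is 2-LO colorable. Conversely, given an LO $k$-coloring $\chi'$ of $H'$, setting $\chi(v) = \chi'(\phi(v))$ should lift to an LO $k$-coloring of $H$: each edge $e$ of $H$ maps to an edge $\phi(e)$ of $H'$ whose unique max lifts back to a unique max in $e$, provided that $\phi$ does not collapse two vertices of $e$ into one.

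The main obstacle is exactly that proviso: if $H$ simultaneously contains $\{a,b,c\}$, $\{a,b,d\}$, and $\{c,d,e\}$, then contracting $c$ with $d$ collapses two vertices of $\{c,d,e\}$. Combining the forcing lemma with the 2-LO constraint on $\{c,d,e\}$ forces $\chi(c) = \chi(d) = 1$ and $\chi(e) = 2$ in every 2-LO coloring, so the degenerate triple encodes a chain of constraints rather than a single edge. To complete the reduction one must reinstate these forced constraints in $H'$ without introducing new vertices, either by sequencing contractions to avoid such collapses when possible or, when a collapse is unavoidable, by replacing the degenerate pair with a short local gadget on already-existing vertices that pins down the forced colors of any lift. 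This bookkeeping, carried out by Nakajima and \v{Z}ivn\'y, is the most delicate part of the argument and is what guarantees that $H'$ remains a linear 3-uniform hypergraph on no more vertices than $H$.
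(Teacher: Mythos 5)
The paper does not prove this proposition; it is imported verbatim from \cite{NZ22}, so there is no in-paper argument to compare against. Your reconstruction does capture the two genuine ingredients of the original proof: the forcing lemma (two edges sharing two vertices force their third vertices to agree in every 2-LO coloring -- your case analysis is correct) and the resulting iterated identification of forced-equal vertices. That much is right and is indeed how the reduction works.

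However, the proof as written has a genuine gap, and it is exactly the one you flag in your last paragraph: when an edge $\set{c,d,e}$ contains both vertices being identified, the quotient contains a degenerate ``edge'' $\set{x,x,e}$, so the output is not 3-uniform and the lifting argument breaks. You do not close this gap; you observe (correctly) that such a configuration forces $\chi(c)=\chi(d)=1$ and $\chi(e)=2$ in every 2-LO coloring, and then defer to ``bookkeeping carried out by Nakajima and \v{Z}ivn\'y.'' Deferring to the cited source is a citation, not a proof. Moreover, the fix you gesture at -- ``a short local gadget on already-existing vertices'' -- is not constructed and does not obviously exist: the constraint a degenerate edge must impose on a lift is the strict precedence $\chi(e)>\chi(x)$ between two specific vertices, and a single 3-uniform LO edge $\set{x,p,e}$ on an existing vertex $p$ does not imply this (its unique maximum could be $p$ or $x$), while adding such edges can also destroy 2-LO colorability of the instance (e.g., if $p$ is forced to color 2 elsewhere). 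So the degenerate case needs a real idea, not bookkeeping, and as submitted the argument establishes neither that $H'$ is linear and 3-uniform nor that $k$-colorings lift. To make this self-contained you would need to either exhibit the actual construction handling edges that collapse under the quotient, or prove that the problematic configuration can always be eliminated by some other reduction that you spell out.
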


Given an 2-LO colorable 3-uniform hypergraph $H=(V,E)$, one can
consider LO coloring it with $\set{-1, +1}$, with the natural
ordering.  Then we have $x_a + x_b + x_c = -1$ for each edge $\set{a,
  b, c}\in E$, where $x_a$ is the color assigned to vertex $a\in V$.
Relaxing this constraint to a vector program we get
\prettyref{sdp:lo_coloring}.\footnote{Observe that
\prettyref{sdp:lo_coloring} can equivalently be written in terms of
dot products using the following constraints:\\ (i) $\inprod{\vect{a} +
\vect{b} + \vect{c} + \vect{\emptyset}, \vect{a} + \vect{b} + \vect{c} + \vect{\emptyset}} = 0 \quad
\forall \{a,b,c\} \in E$, \quad and \quad (ii) $ \inprod{\vect{a},
\vect{a}} = 1 \quad \forall a \in V\cup\set{\emptyset}$.  }
\begin{sdp}
\label{sdp:lo_coloring}
\begin{align}
\vect{a} + \vect{b} + \vect{c} & = - \vectone & \forall \set{a,b,c} \in E, \label{eq:sdp-sum} \\ 
\norm{\vect{a}}^2 & = 1 & \forall a \in V \cup \set{\emptyset}. 
\end{align}
\end{sdp}

For any $a \in V$, we now define $\gamma_a \defeq
\inprod{\vect{a}, \vectone}$.  The values $\set{\gamma_a}_{a\in V}$ might not be
integral and could even be \emph{perfectly balanced} (i.e.,
$\gamma_a=\gamma_b=\gamma_c=-\frac{1}{3}$ for an edge $\{a,b,c\} \in
E$).  Hence, these values might not contain any information as to how
the colors should be assigned to the vertices, and they might not even
reveal information as to which vertex in an edge should receive the
largest color.  However, when all edges contain balanced vertices
(i.e., $\gamma_v \approx -\frac{1}{3}$ for all vertices), threshold rounding will be used.  Formally, we have the following
definition.

\begin{definition}
\label{def:balanced}
For $\eps > 0$, we say a vertex $v \in V$ is {\em $\eps$-balanced} if
$\gamma_v\in [-1/3 - \eps, -1/3+\eps]$.
\end{definition}

For the rest of this paper, we fix $\eps = 1/n^{100}$, where $n$ is
the number of vertices in the (fixed) hypergraph that we are trying to
LO color.  This is an abuse of notation, but simplifies our
presentation.  If a vertex is not $\eps$-balanced, we say that it is
{\em unbalanced}.  If all vertices of a hypergraph $H$ are
$\eps$-balanced, we say that $H$ is an $\eps$-balanced hypergraph.

We observe that there is a combinatorial method to color all
unbalanced vertices using relatively few colors.  This rounding method
uses a bisection-like strategy on $\set{\gamma_a}_{a\in V}$ to color the
unbalanced vertices and outputs a {\em partial LO coloring}, which we
define as follows.
\begin{definition}
A {\em partial LO coloring} of a 3-uniform hypergraph $H=(V,E)$ is a coloring of a subset of vertices $V_1 \subseteq V$ using the set of colors $C$ such that for each edge $e \in E$, the set $e \cap V_1$ has a unique maximum color from $C$.
\end{definition}

The next lemma is proved in Section \ref{sec:combRound}.

\begin{restatable}{lemma}{unbalancedPartialChi}
  \label{lem:unbalancedPartialColoring}
Let $H = (V,E)$ be a 2-LO colorable 3-uniform hypergraph
and let $\eps > 0$.  Then there exists a polynomial-time algorithm
that computes a partial LO coloring of $H$ using
$O\paren{\log\paren{\frac{1}{\eps}}}$ colors that colors all
unbalanced vertices.
\end{restatable}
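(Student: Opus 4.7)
The plan hinges on the identity
\begin{equation*}
\gamma_a + \gamma_b + \gamma_c \;=\; -1 \quad \text{for every edge } \{a,b,c\} \in E,
\end{equation*}
obtained by taking the inner product of \prettyref{eq:sdp-sum} with $\vectone$. Declare each $v \in V$ to be \emph{high} if $\gamma_v > -\tfrac{1}{3}+\eps$, \emph{low} if $\gamma_v < -\tfrac{1}{3}-\eps$, and balanced otherwise. The identity immediately rules out four edge configurations: three highs (sum strictly $> -1$), three lows (sum strictly $< -1$), (2 highs, 1 balanced) (sum $> -1$), and (2 lows, 1 balanced) (sum $< -1$). Hence every edge containing two or more unbalanced vertices has type (2H,1L), (1H,2L), or (1H,1L,1B).

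I would color every low vertex with color $1$ and color the high vertices using a bisection on the range $(-\tfrac{1}{3}+\eps,\,1]$ of their $\gamma$ values. Starting from that interval, recursively split at the midpoint and stop when the subinterval has width at most $\eps$; the bisection has depth $L = O(\log(1/\eps))$. Each high vertex is assigned a color derived from its bisection path such that (i) a larger $\gamma_v$ yields a strictly larger color and (ii) the total palette has size $O(L) = O(\log(1/\eps))$ rather than the full $\Theta(1/\eps)$ leaf count. A natural candidate for (ii) is to label a vertex by the deepest level at which it occupies the right (larger-$\gamma$) half of some parent subinterval, reusing colors across disjoint subtrees.

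To verify the LO property it suffices to inspect the edges with two or more colored vertices. In types (1H,2L) and (1H,1L,1B) the single high vertex has color $\geq 2 > 1$, beating all lows, so the unique maximum is that high vertex. In type (2H,1L) both highs beat the low at color $1$, so LO reduces to the two highs receiving distinct colors, with the larger-$\gamma$ high then being the unique maximum by property (i); this in turn is meant to follow because the sum constraint $\gamma_a + \gamma_b = -1 - \gamma_c$ with $\gamma_c < -\tfrac{1}{3}-\eps$ forces the pair $(\gamma_a,\gamma_b)$ to straddle some midpoint encountered during the recursion.

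The main obstacle is the simultaneous guarantee of (i) and (ii) on high vertices in (2H,1L) edges, because the two highs may have $\gamma$-values arbitrarily close and a naive uniform-precision bisection would thus need $\Theta(1/\eps)$ colors. Resolving this compression requires carefully designing the path-to-color mapping, exploiting both the sum constraint above (which localizes the joint location of the pair once $\gamma_c$ is known) and the linearity of $H$ from \prettyref{prop:linear-reduction} (so that each pair of vertices lies in at most one hyperedge, keeping the conflict structure within each dyadic $\gamma$-band sparse). This color-compression step is the one I expect to require the most care.
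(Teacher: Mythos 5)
There is a genuine gap, and it sits exactly where you flagged it, but the problem is worse than a ``compression'' difficulty: the approach of separating the two high vertices of a $(2H,1L)$ edge by their $\gamma$-values cannot work, because those two values can be \emph{equal}. Take $\gamma_a=\gamma_b=-0.1$ and $\gamma_c=-0.8$; this satisfies $\gamma_a+\gamma_b+\gamma_c=-1$, both $a,b$ are high and $c$ is low. Any coloring that is a function of $\gamma$ alone (as your bisection-path labeling is) must give $a$ and $b$ the same color, and since you give the low vertex $c$ the \emph{smallest} color, the edge then has its maximum color repeated twice --- the LO property fails. Your hoped-for rescue, that ``the sum constraint forces the pair $(\gamma_a,\gamma_b)$ to straddle some midpoint,'' is simply false in this example. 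Symmetrically, giving lows the largest color instead breaks on $(1H,2L)$ edges with $\gamma_a=\gamma_b=-0.5$, $\gamma_c=0$. Linearity of $H$ does not help either: a single offending edge already defeats the scheme.

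The paper escapes this by never trying to distinguish two same-side vertices of an edge from each other. Its bisection alternates sides, producing nested intervals $I_0\supset I_1\supset\cdots$ that all contain $-1/3$ and shrink toward it; at step $j$ it strips the extreme slab $I_j\setminus I_{j+1}$ and gives it the \emph{largest} unused color. The key structural fact (\prettyref{lem:reduction-independence}) is that if an edge still has all three $\gamma$-values inside $I_j$, then at most one of them can land in the stripped slab --- two of them there would force the third outside $I_j$ by the sum constraint. Consequently the first time any vertex of an edge is colored, exactly one vertex of that edge is colored, and it receives a color strictly larger than anything assigned later; so when two vertices of an edge do eventually share a color (as $a,b$ above do), the third already carries a strictly larger one, and the partial LO property holds. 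Your case analysis of edge types is correct and your color budget is right, but without this ``strip an odd independent set with the largest remaining color'' mechanism (or an equivalent), the $(2H,1L)$ and $(1H,2L)$ edges with coinciding $\gamma$-values are not handled, and the proof does not go through.
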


We remark that the previous lemma can be viewed as a reduction from LO
coloring in 2-LO colorable 3-uniform hypergraphs to LO coloring in
2-LO colorable 3-uniform {\em balanced} hypergraphs.  To formalize
this, let $V_U$ denote the vertices that are colored in a partial LO
coloring produced via \prettyref{lem:unbalancedPartialColoring}.
Let $V_B = V \setminus{V_U}$.
Notice that $V_B$ contains only
$\eps$-balanced vertices, while $V_U$ contains all the unbalanced
vertices but might also contain some $\eps$-balanced vertices.  Thus,
the induced hypergraph $H_B = (V_B, E(V_B))$ is a balanced
hypergraph.\footnote{Note that for a
  hypergraph $H=(V,E)$ and $S \subset V$, we say $H' = (S, E(S))$
  contains the edges {\em induced} on $S$, meaning an edge belongs to
  $H'$ if all of its vertices belong to $S$.  In other words, an
  induced subhypergraph of a 3-uniform must also be 3-uniform (or
  empty).  Notice that $S$ can contain vertices that do not belong to
  any edge in $E(S)$.  These vertices can receive any color in a valid
LO coloring of $H'$.}  
We now show that we
can combine a partial LO coloring for $H=(V,E)$ which colors $V_U$ and
an LO coloring for $H_B=(V_B,E(V_B))$ to obtain an LO coloring of $H$.

\begin{proposition}\label{prop:combine}
  Let $H=(V_B \cup V_U,E)$ be a 2-LO colorable 3-uniform hypergraphs,
  let $\eps > 0$.  Let $c_U$ be a partial LO coloring of $H$ using 
  colors from the set $C_U$ that only
  assigns colors to $V_U$ and let $c_B$ be an LO coloring of
  $H_B=(V_B,E(V_B))$ using colors from the set $C_B$.
  Then we can obtain an LO coloring of $H$ using at
  most $|C_U| + |C_B|$ colors.
\end{proposition}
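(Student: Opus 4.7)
The plan is to combine the two colorings by making their color palettes disjoint and shifting them so that every color in $c_U$ is strictly larger than every color in $c_B$. Concretely, relabel if necessary so that $C_B = \{1,2,\ldots,|C_B|\}$ and $C_U = \{|C_B|+1,\ldots,|C_B|+|C_U|\}$, and define the combined coloring $c : V \to C_B \cup C_U$ by $c(v) := c_U(v)$ for $v \in V_U$ and $c(v) := c_B(v)$ for $v \in V_B$. This uses exactly $|C_U|+|C_B|$ colors, so only the LO property of $c$ on each edge of $H$ needs to be verified.

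I would then carry out a short case analysis on an arbitrary edge $e \in E$, split according to $k := |e \cap V_U| \in \{0,1,2,3\}$. If $k=0$ then $e \in E(V_B)$, so $c$ agrees with $c_B$ on every vertex of $e$, and the unique-maximum property is inherited from the LO coloring $c_B$ of $H_B$. If $k \geq 1$, then every vertex of $e \cap V_U$ receives a color from $C_U$ which strictly exceeds every color in $C_B$; hence any unique maximum of $e$ under $c$ must be attained within $e \cap V_U$. The case $k=1$ is then immediate. For $k \in \{2,3\}$, the partial LO property of $c_U$ guarantees a unique maximum color inside $e \cap V_U$, and since the remaining vertices of $e$ (if any) have strictly smaller colors from $C_B$, that same vertex is the unique maximum of $e$ under $c$.

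Since every edge of $H$ has a unique maximum color under $c$, no edge is monochromatic, and $c$ is therefore a valid LO coloring of $H$ using $|C_U|+|C_B|$ colors. I do not anticipate any substantive obstacle in this proof — it is essentially the observation that stacking the palette of $c_U$ on top of that of $c_B$ is consistent with the LO requirement — and the only care needed is to ensure the two palettes are separated in the ordering so that vertices of $V_U$ always dominate within any mixed edge.
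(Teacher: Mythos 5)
Your proposal is correct and follows essentially the same route as the paper's proof: both place the palette $C_U$ strictly above $C_B$ and then verify the unique-maximum property by a case analysis on $|e \cap V_U|$, invoking the partial LO property of $c_U$ when two or three vertices of an edge lie in $V_U$ and the LO property of $c_B$ when the edge lies entirely in $V_B$. No gaps.
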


\begin{proof}
We assume that the colors in the set $C_U$ are larger than the colors
in the set $C_B$.  We want to show that the given assignment of colors
from $C_U$ for vertex set $V_U$ and $C_B$ for vertex set $V_B$ taken
together forms a proper LO coloring of $H$.

Any edge $e \in E$ with $|e \cap V_B| = 3$ or $|e \cap V_U|=3$ has a
unique maximum color by assumption since $c_B$ is an LO coloring of
$H_B$ and $c_U$ is a partial LO coloring of $H$.  Suppose $|e \cap
V_U| = 2$.  Then, by definition of partial LO coloring, it has a unique
maximum in $C_U$ and will have a unique maximum in the output
coloring.  If $|e \cap V_U| = 1$, then $e$ has a unique maximum color,
because all colors in $C_U$ are larger than the colors in $C_B$.
  \end{proof}

Thus, if our goal is to LO color 2-LO colorable 3-uniform hypergraphs
with a polynomial number of colors, we can focus on LO coloring {\em
  balanced} 2-LO colorable 3-uniform hypergraphs.  The next
corollary follows from \prettyref{lem:unbalancedPartialColoring} and
\autoref{prop:combine}.
\begin{corollary}\label{cor:reduction}
Let $\alpha \in (0,1)$.
Suppose we can LO color an $\eps$-balanced 2-LO colorable 3-uniform
hypergraph $H$ with $\widetilde{O}(n^{\alpha})$ colors.  Then we can
LO color a 2-LO colorable 3-uniform hypergraph with $\widetilde{O}(n^{\alpha})$ colors. 
\end{corollary}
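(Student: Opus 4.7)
The plan is essentially to assemble the two ingredients already in hand: the combinatorial rounding from \prettyref{lem:unbalancedPartialColoring} disposes of the unbalanced vertices cheaply, and the hypothesized balanced coloring procedure then finishes off what remains. First I would solve \prettyref{sdp:lo_coloring} on the input hypergraph $H=(V,E)$ and compute the values $\{\gamma_a\}_{a\in V}$, which partitions $V$ into unbalanced and $\eps$-balanced vertices with $\eps=1/n^{100}$ as fixed in the paper. Applying \prettyref{lem:unbalancedPartialColoring} produces a partial LO coloring $c_U$ of $H$, using a set $C_U$ of at most $O(\log(1/\eps)) = O(\log n)$ colors, whose colored vertex set $V_U$ contains every unbalanced vertex.

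Next I would consider the residual induced subhypergraph $H_B=(V_B, E(V_B))$ on $V_B := V \setminus V_U$. Two quick sanity checks are needed here: (i) $H_B$ is $\eps$-balanced, because $V_B$ contains only $\eps$-balanced vertices by construction; and (ii) $H_B$ is 2-LO colorable, because the restriction of any 2-LO coloring of $H$ to $V_B$ assigns colors from $\{-1,+1\}$ in such a way that each edge of $E(V_B)$, being an edge of $H$, still has a unique maximum. With these observations in place, the hypothesis of the corollary applies directly: we obtain an LO coloring $c_B$ of $H_B$ using a set $C_B$ of at most $\widetilde O(n^\alpha)$ colors (where $n$ may be replaced by $|V_B|\leq n$, still giving $\widetilde O(n^\alpha)$).

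Finally, I would invoke \autoref{prop:combine} with the colorings $c_U$ and $c_B$ and the color sets $C_U$ and $C_B$, stipulating that every color in $C_U$ exceeds every color in $C_B$. The proposition produces a valid LO coloring of $H$ using at most $|C_U| + |C_B| \leq O(\log n) + \widetilde O(n^\alpha) = \widetilde O(n^\alpha)$ colors, since the $\log n$ overhead is absorbed by the $\widetilde O$ notation. There is essentially no technical obstacle here; the only point to be careful about is (ii) above, that 2-LO colorability is preserved under taking induced subhypergraphs, which is immediate from the definition. Everything else is bookkeeping of color sets and a direct appeal to \prettyref{lem:unbalancedPartialColoring} and \autoref{prop:combine}.
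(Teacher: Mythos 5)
Your proof is correct and follows the same route the paper intends: apply \prettyref{lem:unbalancedPartialColoring} to color the unbalanced vertices with $O(\log(1/\eps)) = O(\log n)$ colors, color the balanced residual $H_B$ by hypothesis, and merge via \autoref{prop:combine}. The sanity checks you flag — that $H_B$ remains $\eps$-balanced and 2-LO colorable — are exactly the observations the paper makes in the surrounding text before stating the corollary.
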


Now we can focus on balanced hypergraphs.  We capitalize on the
promised structure to prove the next lemma, in which we show that we
can find an LO coloring for a balanced hypergraph, in particular for $H_B =
(V_B, E(V_B))$.

\begin{lemma}\label{lem:colorbalanced}
Let $H_B = (V_B, E_B)$ be an $\eps$-balanced 2-LO colorable 3-uniform
hypergraph.  Then there exists a polynomial-time algorithm that
computes an LO coloring using at most \ $\widetilde{O}(|V_B|^{1/5})$ colors.  
  \end{lemma}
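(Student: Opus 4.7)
The plan is to apply the iterative Wigderson-style coloring framework, alternating between a high-degree regime that uses combinatorial extraction of even independent sets (as in \cite{NZ22}) and a low-degree regime that uses SDP-based threshold rounding to extract odd independent sets. At each round I would produce one such independent set, assign it a fresh color, and recurse on the remaining vertices while maintaining the LO invariant.

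The crucial preparation is to exploit the balanced structure of $H_B$. For every $v \in V_B$, define the normalized projection $\U_v \defeq (\vect{v} - \gamma_v \vectone)/\norm{\vect{v} - \gamma_v \vectone}$. Since $\gamma_v \in [-1/3 - \eps, -1/3 + \eps]$ and $\vect{a} + \vect{b} + \vect{c} = -\vectone$ on every edge $\{a,b,c\}$, the three projected (unnormalized) vectors $\vect{v} - \gamma_v \vectone$ sum to a vector of tiny norm $O(\eps)$. Combined with the fact that each such projected vector has norm $\sqrt{1 - \gamma_v^2} = \sqrt{8/9} \pm O(\eps)$, an elementary calculation shows that $\inprod{\U_a, \U_b} \leq -\tfrac{1}{2} + O(\eps)$ for every pair $\{a,b\}$ lying in a common edge. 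This is exactly the geometric configuration that KNS-style threshold rounding is designed to exploit, and since $\eps = 1/n^{100}$ the slack is negligible for our purposes.

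Next I would run the standard two-regime tradeoff on the current subhypergraph with maximum degree $\Delta$, calibrated against a threshold $\Delta^\star$ to be chosen. When $\Delta \geq \Delta^\star$, I use the even-independent-set extraction from \cite{NZ22}, which leverages the linearity of $H_B$ (inherited from $H$ via \autoref{prop:linear-reduction}) to find an independent set of size $\widetilde\Omega(|V_B|/\sqrt{\Delta})$ that meets each edge in $0$ or $2$ vertices. When $\Delta < \Delta^\star$, I apply threshold rounding to the vectors $\{\U_v\}$: sample a Gaussian vector $g$, and take the set $S = \set{v : \inprod{g,\U_v} \geq t}$ for $t = \Theta(\sqrt{\log \Delta})$. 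The pairwise inner-product bound $\leq -\tfrac 12 + O(\eps)$ inside each edge makes the probability of selecting two endpoints of a common edge much smaller than the probability of selecting a single vertex, so (after pruning one vertex per bad edge) we obtain an odd independent set of size $\widetilde\Omega(|V_B|/\Delta^{\gamma})$ for an appropriate exponent $\gamma$. I then color the independent set: odd independent sets receive a fresh color strictly larger than all colors still to be used, while even independent sets receive a fresh color strictly smaller, exactly in the style of \cite{NZ22}, which preserves the LO property at every step. Balancing the two regimes by setting $\Delta^\star = |V_B|^{3/5}$ yields $\widetilde O(|V_B|^{1/5})$ colors in total.

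The main obstacle I expect is the threshold-rounding analysis in the low-degree regime: I need to verify that the inner-product bound $-\tfrac 12 + O(\eps)$ on the normalized projections $\U_v$ (as opposed to the original SDP vectors $\vect{v}$) survives intact, and to carefully track how the $O(\eps)$ slack propagates through the tail-ratio estimate used to control bad edges. A secondary concern is bookkeeping: one must ensure that the alternating insertion of odd-type colors at the top and even-type colors at the bottom does not create a conflict, and that the subhypergraph remaining after each round is still $\eps$-balanced and 2-LO colorable under the restriction of the promised LO coloring. Both of these are essentially operational, and the framework from \cite{NZ22} together with \autoref{prop:combine} already provides the templates needed to handle them.
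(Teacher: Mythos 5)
Your architecture is exactly the paper's: alternate between the even-independent-set extraction of \cite{NZ22} in the high-degree regime and threshold rounding of the normalized projections $\vectu{v}$ in the low-degree regime, then invoke the Blum-style progress framework together with the LO bookkeeping of \prettyref{lem:lo-valid-progress}. But the quantitative step on which the final bound rests is not in order. You quote the even-independent-set guarantee as $\widetilde\Omega(|V_B|/\sqrt{\Delta})$, whereas Proposition 11 of \cite{NZ22} (\prettyref{prop:high-degree-case}) gives $\Omega(\sqrt{|V_B|\Delta})$ --- a bound that \emph{grows} with $\Delta$, which is what makes it a high-degree tool at all. With the bound as you state it, both regimes deteriorate as $\Delta$ increases, there is no crossover to balance against the odd-independent-set bound, and plugging $\Delta^\star=|V_B|^{3/5}$ into $|V_B|/\sqrt{\Delta^\star}$ yields sets of size only $|V_B|^{7/10}$, hence $|V_B|^{3/10}$ colors rather than $\widetilde O(|V_B|^{1/5})$. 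The claimed conclusion follows only from the correct bound $\sqrt{|V_B|\Delta}$ matched against an odd-independent-set bound of $\widetilde\Omega(|V_B|/\Delta^{1/3})$.

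That exponent $1/3$ is the second issue: you leave $\gamma$ as ``appropriate,'' but the entire improvement over the $\widetilde O(n^{1/3})$ bound of \cite{NZ22} lives there, and the balancing at $\Delta^\star=|V_B|^{3/5}$ works only for $\gamma=1/3$. You do identify the right mechanism --- on each edge $\inprod{\vectu{a},\vectu{b}}\le-\tfrac12+O(\eps)$, equivalently $\vectu{a}+\vectu{b}+\vectu{c}\approx 0$, so the event $a,b\in S(t)$ forces $\inprod{\vectu{c},g}\le -2t+o(1)$ and the bad-pair probability is at most roughly $\gcap{2t}$ --- but you stop before the computation that matters: $\gcap{2t}=\widetilde O\paren{\gcap{t}^{4}}$, so choosing $\alpha=\gcap{t}\approx\Delta^{-1/3}$ makes the expected number of pruned vertices $O(\Delta|V_B|\alpha^{4}\,\mathrm{polylog})\ll\alpha|V_B|$, which is precisely what yields $\gamma=1/3$ (the paper's \prettyref{lem:correlation-bound} and \prettyref{lem:independent-set-size}). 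This is the core of the argument rather than an operational detail; the remaining concerns you list (color ordering, preservation of 2-LO colorability and balancedness under induced subhypergraphs) are indeed routine and handled as you describe.
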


We recall our main theorem.

\loColor*

The proof of \prettyref{thm:lo-color} follows from
\prettyref{cor:reduction} and \autoref{lem:colorbalanced}.
It remains to prove \prettyref{lem:colorbalanced}, which we discuss next.

\subsection{Coloring by Finding Independent Sets}

In many graph coloring algorithms, we ``make progress'' by finding an
independent set and coloring it with a new
color~\cite{blum1994new,KMS98,blum1997n,KNS01,NZ22}.  When LO coloring
a hypergraph, a similar idea may be used, but we need to consider
certain types of independent sets.  With the standard notion of
independent set in a 3-uniform hypergraph, in which the independent
set intersects each edge of the hypergraph at most twice, it is not clear
how to obtain a coloring in which each edge contains a unique maximum
color.  Thus, for a 3-uniform hypergraph $H=(V,E)$, following the
approach of \cite{NZ22}, we consider the following two types of
independent sets.\footnote{We remark that what \cite{NZ22} refer to as an ``independent set'' is what we refer to here as an ``odd independent set''.}
\begin{description}
	\item[Odd Independent Set:] We call $S\subseteq V$ an \emph{odd independent set} if $|S\cap e| \leq  1$ for each edge $e\in E$.
	\item[Even Independent Set:] We call $S\subseteq V$ an \emph{even independent set} if $|S\cap e| \in \{0,2\}$ for each edge $e\in E$.  
\end{description}
In \prettyref{lem:lo-valid-progress}, we show that we can make
progress by coloring an odd independent set with a `large' color or by
coloring an even independent set with a `small' color.  This is
formally stated in a proposition from \cite{NZ22}.  Since we modify the
presentation slightly to ensure compatibility with our framework, we 
include the
statement and the proof here for the sake of completeness.

\begin{lemma}[Corollary of Proposition 5 in \cite{NZ22}]
  \label{lem:lo-valid-progress}
  Let $H=(V,E)$ be a hypergraph, let $S_1\subseteq V$ be an odd
  independent set and let $S_2\subseteq V$ be an even independent set.
  Let $H_1=(V_1, E_1)$, $H_2=(V_2, E_2)$ be the hypergraphs induced by
  $V_1=V\setminus S_1$ and $V_2=V\setminus S_2$, respectively.  Then,
\begin{enumerate}
\item An LO coloring of $H_1$ using a set of colors $C_1$ can be extended to an LO coloring of $H$ by assigning a color $c_1$ that is strictly larger than all the colors in $C_1$ to the vertices in $S_1$.

\item Analogously, an LO coloring of $H_2$ using a set of colors $C_2$ can be extended to an LO coloring of $H$ by assigning $c_2$ to the vertices in $S_2$ where $c_2$ is strictly smaller than all the colors in $C_2$.
  \end{enumerate}
\end{lemma}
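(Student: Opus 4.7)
The plan is a straightforward case analysis on the intersection of each edge $e \in E$ with the sets $S_1$ or $S_2$, leveraging exactly the defining property of odd/even independent sets together with the fact that the newly introduced color $c_i$ is placed at an extreme of the existing color order.

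For part (1), fix any edge $e \in E$. Since $S_1$ is an odd independent set, $|e \cap S_1| \in \{0,1\}$. If $|e \cap S_1| = 0$, then $e \subseteq V_1$, so $e \in E_1$, and $e$ already has a unique maximum color from $C_1$ by the assumed LO coloring of $H_1$; this maximum is unchanged because the only new color assignment ($c_1$) is given to vertices outside $e$. If $|e \cap S_1| = 1$, then exactly one vertex of $e$ receives $c_1$, and by hypothesis $c_1$ is strictly larger than every color in $C_1$, so this vertex is the unique maximum of $e$.

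For part (2), the argument is symmetric using $|e \cap S_2| \in \{0,2\}$. If $|e \cap S_2| = 0$, then $e \in E_2$ and its unique maximum is inherited from the LO coloring of $H_2$. If $|e \cap S_2| = 2$, then two vertices of $e$ receive color $c_2$, while the remaining vertex $v \in e \cap V_2$ receives some color from $C_2$; since $c_2$ is strictly smaller than every color in $C_2$, the vertex $v$ is the unique maximum of $e$.

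The only minor subtlety is that the remaining vertex $v$ above may be isolated in $H_2$ (i.e., not contained in any edge of $E_2$), but an LO coloring of $H_2$ still assigns such vertices some color from $C_2$, so the argument goes through. No step here is an obstacle; the lemma is essentially immediate from the definitions of odd and even independent sets combined with the extremality of the new color.
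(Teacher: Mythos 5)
Your proof is correct and follows essentially the same case analysis as the paper's: edges disjoint from $S_i$ inherit their unique maximum from the coloring of $H_i$, while edges meeting $S_1$ once get $c_1$ as their unique (largest) maximum and edges meeting $S_2$ twice leave the third vertex's strictly larger $C_2$-color as the unique maximum. Your remark about isolated vertices of $H_2$ is a reasonable observation that the paper handles via its convention on induced subhypergraphs.
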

\begin{proof}
In the proposed extension of the coloring from $H_1$ to $H$, there is
no edge $e\in E_1$ where the maximum color in $e$ occurs more than
once in $e$; otherwise, the promised coloring of $H_1$ using $C_1$ is
not valid.  Consider any edge $\set{u,v,w}\in E\setminus E_1$.  By
definition of $S_1$, we have $\card{\set{u,v,w}\cap S_1}\leq 1$.  Note
that $\card{\set{u,v,w} \cap S_1}\ne 0$ as $\set{u,v,w}\not \in E_1$.
Therefore, we must have $\card{\set{u,v,w} \cap S_1}=1$.  Without loss
of generality, assume that $u\in S_1$ and $v,w\not \in S_1$.  Then, in
the proposed coloring, $c_1$ is only used for $u$, while $v,w$ are
colored using some color(s) from $C_1$.  So, $c_1$ is the largest
color in $\set{u,v,w}$ and occurs exactly once.  Hence, for every
edge, the corresponding (multi)set of colors has a unique maximum, and
we conclude that the proposed coloring is a proper LO coloring of $H$.

Similarly, in the proposed extension of coloring from $H_2$ to $H$
there is no edge $e\in E_2$ where the maximum color in $e$ occurs more
than once in $e$.  Again, consider any edge $\set{u,v,w}\in E\setminus
E_2$.  In this case, we have $\card{\set{u,v,w}\cap S_2}= 2$.  Without
loss of generality, assume that $u,v\in S_2$ and $w\not \in S_2$.
Then, in the proposed coloring, $c_2$ is only used on $u,v$, while $w$
is colored using some color $c$ from $C_2$.  So, $c$ is the largest
color in $\set{u,v,w}$ and it occurs exactly once.  Hence, for every
edge, the corresponding (multi)set of color has a unique maximum, and
the proposed coloring is therefore a proper LO coloring of $H$.
\end{proof}

The following proposition is essentially Lemma 1 in \cite{blum1994new} and follows in a straight-forward manner from \autoref{lem:lo-valid-progress}.

\begin{proposition}[Proposition 5 in \cite{NZ22}]
\label{prop:blum-progress}
  Let $H=(V,E)$ be an $\eps$-balanced, 2-LO colorable 3-uniform linear
  hypergraph on $m$ vertices.
  Suppose we can find an odd independent set of size at least
  $f(m)$ in $H$ or an even
  independent set of size at least $f(m)$ in $H$ (where $f$ is {\em
    nearly-polynomial}\footnote{Definition 1 in \cite{blum1994new}.  A
  function $f(m) = m^{\alpha} \polylog{m}$ for $\alpha >0$ is nearly-polynomial. }), then there exists a
  polynomial-time algorithm that colors any 
$\eps$-balanced, 2-LO colorable 3-uniform linear
  hypergraph on $n$ vertices with $n/f(n)$ colors.
  \end{proposition}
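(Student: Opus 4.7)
The plan is to apply the standard iterative Blum-style coloring strategy in tandem with \prettyref{lem:lo-valid-progress}. At the outset, fix one SDP solution $\{\vect{a}\}$ to \autoref{sdp:lo_coloring} for the input hypergraph $H$; this solution will be reused throughout. Then maintain an induced subhypergraph $H_i = (V_i, E(V_i))$ starting from $H_0 = H$ and repeat the following until no vertices remain: invoke the hypothesized subroutine on $H_i$ to obtain either an odd or an even independent set $S_i$ of size at least $f(m_i)$ where $m_i = |V_i|$; in the odd case, assign every vertex of $S_i$ a fresh color strictly larger than every color used so far, and in the even case, a fresh color strictly smaller than every color used so far; then set $V_{i+1} := V_i \setminus S_i$ and continue.

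To invoke the subroutine at each step, I would verify that $H_i$ still satisfies the proposition's hypotheses. Linearity and 3-uniformity are clearly preserved under restriction to an induced subhypergraph, and the restriction of the promised 2-LO coloring of $H$ to $V_i$ is a valid 2-LO coloring of $H_i$. For $\eps$-balancedness, the restriction of the fixed vectors $\{\vect{a}\}_{a \in V_i \cup \{\emptyset\}}$ is feasible for \autoref{sdp:lo_coloring} on $H_i$ (removing vertices only removes edge constraints), and each $\gamma_v = \inprod{\vect{v},\vectone}$ is unchanged, so every $v \in V_i$ remains $\eps$-balanced. Applying \prettyref{lem:lo-valid-progress} inductively backwards along the chain $H_T \subseteq H_{T-1} \subseteq \cdots \subseteq H_0$ (with $H_T$ empty) then shows that the accumulated assignment is a valid LO coloring of $H$.

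It remains to bound the number of iterations $T$, which equals the total number of colors used, up to a factor of two coming from the two disjoint ``top'' and ``bottom'' color pools. Writing $f(m) = m^{\alpha} \cdot \mathrm{polylog}(m)$ with $\alpha \in (0,1)$ and partitioning iterations into doubling phases (phase $k$ handles indices with $m_i \in (n/2^{k+1}, n/2^k]$), each phase contributes $O\paren{(n/2^k)^{1-\alpha}/\mathrm{polylog}(n)}$ iterations; summing the resulting geometric series in $k$ yields $T = O(n/f(n))$, which after absorbing the constant factor into $f$ matches the claimed $n/f(n)$ bound. The only nontrivial conceptual point is the invariance of $\eps$-balancedness under restriction, which is why we fix one SDP solution at the outset rather than re-solving it at each iteration; the remaining work (maintaining a two-sided color palette and executing the iteration) is routine bookkeeping.
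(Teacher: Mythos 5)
Your overall architecture is the intended one (the paper itself only cites \cite{NZ22} and \prettyref{lem:lo-valid-progress} for this proposition): iterate, peel off an odd or even independent set, recolor, and count iterations via the doubling-phase argument. Your handling of the hypotheses being preserved (fixing one SDP solution so that the $\gamma_v$, and hence $\eps$-balancedness, are inherited by induced subhypergraphs) and your $O(n/f(n))$ iteration count are both fine.

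However, your color-assignment rule is backwards, and this is a genuine error rather than a typo-level slip, because as stated it produces invalid LO colorings. \prettyref{lem:lo-valid-progress} requires the color given to an odd independent set $S_i$ to be strictly \emph{larger than all colors used on the surviving hypergraph $H_{i+1}$}, i.e.\ larger than the colors assigned in \emph{later} iterations (and symmetrically, an even set must get a color smaller than all later colors). You instead make the fresh color larger (resp.\ smaller) than all colors used \emph{so far}, i.e.\ in earlier iterations, which is the wrong comparison for the backward induction you then invoke. Concretely: suppose iteration $1$ finds an even set $S_1$ containing $u,v$ from an edge $\{u,v,w\}$ of $H_1$, and iteration $2$ finds an even set $S_2$ containing $w$ (legal, since $\{u,v,w\}\notin E(V_2)$). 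Under your rule $c(w) < c(u)=c(v)$, so the edge's maximum color is attained twice. The fix is the standard two-sided palette run in the \emph{opposite} direction: odd sets receive successively \emph{decreasing} colors from a ``high'' block and even sets receive successively \emph{increasing} colors from a ``low'' block, with every high color above every low color (e.g.\ give the odd set of iteration $i$ the color $(1,-i)$ and the even set the color $(0,i)$ under lexicographic order). With that reversal, each $S_i$'s color dominates (resp.\ is dominated by) all colors used on $V_{i+1}$, the inductive application of \prettyref{lem:lo-valid-progress} goes through, and the rest of your argument stands.
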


Following this standard notion of ``making progress'' from
\cite{blum1994new}, we simply need to show that we can find an even or
an odd independent set of size at least $f(m)$ in a 2-LO colorable
3-uniform $\eps$-balanced hypergraph on $m$ vertices.  This will imply
that we can color $H_B$ with $|V_B|/f(V_B)$ colors.  We will show that
we can set $f(m) = \widetilde{\Theta}(m^{4/5})$, which will yield the
bound in \prettyref{lem:colorbalanced}.

As is typical, our coloring algorithm makes progress using two
different methods and chooses between the two methods depending on the
degree.  In the high-degree case, we use the method from \cite{NZ22}
to find a large even independent set.  The method to find a large even
independent from \cite{NZ22} requires the input hypergraph to be a
linear hypergraph, which, as discussed previously, we can assume by
\prettyref{prop:linear-reduction}.

\begin{proposition}[Proposition 11 in \cite{NZ22}]
\label{prop:high-degree-case}
Let $H=(V,E)$ be a linear 2-LO colorable 3-uniform hypergraph and
$\Delta$ be such that $\card{E}=\Omega(\Delta \card{V})$.  Then
there is a polynomial-time algorithm that finds a even independent set
of size at least $\Omega(\sqrt{|V| \Delta})$.
\end{proposition}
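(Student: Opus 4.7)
The plan is to extract a large even independent set by exploiting a high-degree vertex together with the 2-LO colorability promise.

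First, an averaging argument finds a vertex $v^\star \in V$ with $d(v^\star) = \Omega(\Delta)$, since $\sum_v d(v) = 3\card{E} = \Omega(\Delta\card{V})$. By the linearity of $H$, the $d(v^\star)$ hyperedges through $v^\star$ form a matching $M_{v^\star}$ on $\card{N(v^\star)} = 2d(v^\star)$ distinct neighbors; this local link structure is the starting point of the argument.

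Second, fix any promised 2-LO coloring $\chi$ and let $V_\pm = \chi^{-1}(\pm 1)$. Observe that $V_-$ itself is already an even IS of $H$, since every hyperedge has exactly two $V_-$-vertices and one $V_+$-vertex. If $\chi(v^\star) = +1$ then $N(v^\star) \subseteq V_-$, and if $\chi(v^\star) = -1$ then each matched pair in $M_{v^\star}$ contributes exactly one vertex to $V_-$. Trying both guesses, the algorithm isolates a seed set $S \subseteq V_-$ of size $\Omega(\Delta)$.

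Third, I would propagate from $S$ to build a large subset $T \subseteq V_-$ that is simultaneously an even IS of $H$. The propagation uses two inference rules derived from the 2-LO coloring: whenever a hyperedge $\set{u, u', w}$ has $u, u' \in T$, then $w$ is forced into $V_+$, and every hyperedge $\set{w, y, y'}$ through $w$ then forces $\set{y, y'} \subseteq V_-$, which may be added to $T$. A hyperedge that hits the current $T$ in a single vertex needs to be handled by local pruning or by a bounded amount of branching, so that the final output is a genuine even IS of $H$ rather than just a subset of $V_-$.

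The main obstacle is the size bound: converting the density hypothesis into a guarantee that $\card{T} \geq \Omega(\sqrt{\card{V}\Delta})$. Since the seed set has size only $\Omega(\Delta)$, the $\sqrt{\card{V}\Delta}$ target requires the propagation to reach substantially deeper than $N(v^\star)$, and the most delicate step is balancing the branching factor against the depth of the BFS. A double-counting argument using $\card{E} = \Omega(\Delta\card{V})$ together with linearity (which controls overcounts by capping two-edge intersections) should yield the needed growth rate. A secondary challenge is the pruning step: ensuring that turning a subset of $V_-$ into a genuine even IS costs at most a constant factor in size, which again relies on linearity to limit how many vertices can lie in a ``bad'' odd-intersection configuration.
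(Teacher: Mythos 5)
A preliminary remark: the paper does not prove this statement at all --- it imports it verbatim as Proposition 11 of \cite{NZ22} --- so your attempt can only be compared against the argument in that reference, which is different in spirit from yours: it does not try to recover a certified fragment of the hidden $2$-LO colouring.

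Your proposal has genuine gaps at each of its load-bearing steps, and you partly acknowledge this yourself. (i) \emph{The seed set.} In the branch $\chi(v^\star)=-1$ you know only that each petal $\set{b_i,c_i}$ of the link of $v^\star$ contains exactly one vertex of $V_-$; you do not know \emph{which} one, so extracting an actual certified subset of $V_-$ in this branch would require $2^{d(v^\star)}$ guesses. Since you also cannot verify which guess for $\chi(v^\star)$ is correct, one of your two branches yields no usable seed. (ii) \emph{Growth.} Even granting a certified seed $S=N(v^\star)\subseteq V_-$ of size $\Theta(\Delta)$, your inference rules fire only when some hyperedge already has two vertices inside the current set $T$. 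Nothing in the hypothesis $\card{E}=\Omega(\Delta\card{V})$ forces this to happen outside the star of $v^\star$: the propagation can stall immediately at size $O(\Delta_{\max})$, whereas the target $\sqrt{\card{V}\Delta}$ exceeds $\Delta$ by a polynomial factor in the regime where the proposition is applied ($\Delta=\card{V}^{3/5}$). No double-counting argument rescues this, because unit propagation on the underlying 1-in-3-SAT instance genuinely need not percolate --- and if it reliably did, it would come close to resolving the open problem motivating the paper. (iii) \emph{Even-ness.} A proper subset $T\subsetneq V_-$ is in general not an even independent set: any hyperedge whose two colour-$(-1)$ vertices are split by $T$ meets $T$ exactly once. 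The only subsets of $V_-$ that are automatically even independent sets are unions of connected components of the ``petal graph'' on $V_-$ whose edges are the colour-$(-1)$ pairs of the hyperedges --- and that graph is precisely what the algorithm does not know. Your proposed fix (delete any vertex that is the unique $T$-vertex of some hyperedge) can cascade, since each deletion may turn intersections of size two into intersections of size one; there is no argument that a constant fraction of $T$ survives, and in the worst case the process empties $T$.

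One observation in your write-up is worth retaining: because $H$ is linear, the $\card{E}$ colour-$(-1)$ pairs are pairwise distinct and form a simple graph on $V_-$, whence $\card{V_-}\ge\sqrt{2\card{E}}=\Omega\paren{\sqrt{\card{V}\Delta}}$. This explains where the bound comes from and why linearity is essential, but it is only an existence statement for the (algorithmically inaccessible) set $V_-$. The entire content of Proposition 11 of \cite{NZ22} is a polynomial-time construction of a comparably large even independent set that sidesteps identifying $V_-$; that construction is the ingredient missing from your proposal.
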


In the low-degree case, we show how to use an SDP based rounding
method to find a large odd independent set.  Here, we capitalize on
the assumption that our input hypergraph is $\eps$-balanced to obtain
an improvement over the analogous lemma from \cite{NZ22}.  In Section
\ref{sec:sdpRound}, we prove \autoref{lem:odd-independent-set}.
\begin{restatable}{lemma}{oddIndependentSet}
\label{lem:odd-independent-set}
Let $H=(V,E)$ be a $\frac{1}{|V|^{100}}$-balanced 2-LO colorable 3-uniform hypergraph
$H=(V,E)$ with average degree at most $\Delta$.  Then there exists a 
(randomized) polynomial-time algorithm to compute an
odd independent set of size at least
\ $\Omega\paren{\frac{|V|}{\Delta^{1/3}\paren{\ln \Delta}^{3/2}}}$.
\end{restatable}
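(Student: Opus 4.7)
The plan is to prove \autoref{lem:odd-independent-set} via Gaussian threshold rounding applied to a modified set of SDP vectors. First, I solve \autoref{sdp:lo_coloring} to obtain vectors $\set{\vect{v}}_{v\in V\cup\set{\emptyset}}$, and for each $v\in V$ I set $\vect{v}^{\perp} \defeq \vect{v} - \gamma_v\vectone$, the projection of $\vect{v}$ onto the hyperplane orthogonal to $\vectone$. Taking the inner product of the edge constraint \eqref{eq:sdp-sum} with $\vectone$ gives $\gamma_a + \gamma_b + \gamma_c = -1$ for every edge $\set{a,b,c}$, which forces $\vect{a}^{\perp}+\vect{b}^{\perp}+\vect{c}^{\perp}=0$ exactly. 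The $\eps$-balanced hypothesis with $\eps = 1/|V|^{100}$ implies $\gamma_v = -1/3\pm \eps$, so $\norm{\vect{v}^{\perp}}^2 = 1-\gamma_v^2 = 8/9\pm O(\eps)$. Consequently, if $\widehat{\vect{v}} \defeq \vect{v}^{\perp}/\norm{\vect{v}^{\perp}}$ denotes the unit-normalized projection, then for every edge $\set{a,b,c}$ the three unit vectors $\widehat{\vect{a}},\widehat{\vect{b}},\widehat{\vect{c}}$ sum to a vector of norm $O(\eps)$, which forces $\inprod{\widehat{\vect{u}},\widehat{\vect{v}}} = -\tfrac{1}{2}\pm O(\eps)$ for every pair of vertices lying in a common edge; geometrically, every edge looks like an (almost) equilateral triangle on the unit sphere of $\vectone^{\perp}$.

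Next, I sample a standard Gaussian vector $r$ in the span of the $\widehat{\vect{v}}$, fix a threshold $t = \Theta(\sqrt{\ln\Delta})$, and put $S \defeq \set{v\in V : \inprod{\widehat{\vect{v}},r}\ge t}$. Since each $\widehat{\vect{v}}$ is a unit vector, $\Pr{v\in S} = \gcap{t}$; for a pair $\set{u,v}$ contained in a common edge, $(\inprod{\widehat{\vect{u}},r},\inprod{\widehat{\vect{v}},r})$ is a mean-zero bivariate Gaussian with correlation $-\tfrac{1}{2}\pm O(\eps)$, and the classical bivariate Gaussian tail asymptotic (as used in \cite{KNS01,Hal00}) yields $\Pr{u,v\in S} = O(e^{-2t^2}/t^2)$. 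Every edge $e$ with $\card{S\cap e}\ge 2$ contains at least one such correlated pair, so the expected number of edges forcing a deletion is at most $3\card{E}\cdot O(e^{-2t^2}/t^2) = O(|V|\Delta\cdot e^{-2t^2}/\ln\Delta)$.

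Choosing $t^2 \approx \tfrac{2}{3}\ln\Delta$ balances the two estimates: $\mathbb{E}[\card{S}] = \Omega(|V|/(\Delta^{1/3}\sqrt{\ln\Delta}))$, while the expected number of bad edges is a $\sqrt{\ln\Delta}$ factor smaller. Deleting one vertex from $S$ per bad edge therefore yields an odd independent set. Derandomization through conditional expectations on the SDP vectors (alternatively, a polynomial number of independent Gaussian trials combined with a Markov-type high-probability argument) produces, in polynomial time, an odd independent set of size $\Omega(|V|/(\Delta^{1/3}(\ln\Delta)^{3/2}))$, matching the bound in the lemma; the extra $\ln\Delta$ factor is comfortably absorbed into the concentration/derandomization step.

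The main technical obstacle will be carefully propagating the $O(\eps)$ perturbations---arising because balanced only guarantees $\gamma_v = -1/3\pm \eps$ rather than exactly $-1/3$---through the unit normalization, the pairwise identity $\inprod{\widehat{\vect{u}},\widehat{\vect{v}}} = -1/2\pm O(\eps)$, and the bivariate Gaussian tail bound, without disturbing the dominant exponent $e^{-2t^2}$ that drives the estimate. Because $\eps = 1/|V|^{100}$ is polynomially tiny, these corrections are subpolynomial and only affect constants, but one must make the bivariate tail asymptotic quantitatively rigorous at threshold $t = \Theta(\sqrt{\ln\Delta})$ and verify that the polynomial-time derandomization attains the claimed size with the stated polylog slack.
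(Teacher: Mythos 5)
Your proposal is correct and follows essentially the same route as the paper: project the SDP vectors onto $\vectone^{\perp}$, normalize, observe that each edge's three unit vectors nearly sum to zero (so pairwise inner products are $\approx -1/2$), threshold a Gaussian projection at $t=\Theta(\sqrt{\ln\Delta})$, delete vertices of bad edges, and balance parameters exactly as you describe. The only real difference is cosmetic: where you invoke the classical bivariate Gaussian tail for correlation $-1/2\pm O(\eps)$, the paper instead uses the identity $\vectu{a}+\vectu{b}+\vectu{c}\approx 0$ to convert the event $\set{a,b\in S(t)}$ into the univariate event $\inprod{\vectu{c},g}\lesssim -2t$ and bounds it by $\gcap{2t}$, which avoids the two-dimensional asymptotic entirely (and note that an edge with all three vertices in $S(t)$ requires deleting two of them, not one, though this only affects constants).
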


Finally, we are now ready to prove \prettyref{lem:colorbalanced}.

\begin{proof}[Proof of \autoref{lem:colorbalanced}]
We need to show that on a linear 2-LO-colorable 3-uniform
$\eps$-balanced hypergraph on $m$ vertices, we can always find either
an even independent set or an odd independent set of size at least
$f(m) = \widetilde{\Omega}(m^{4/5})$.  By
  \autoref{lem:lo-valid-progress}, this will imply we can color $H_B$ with $|V_B|/f(|V_B|)$ colors.  

Take $\Delta$ be a parameter (fixed later) so that we say we are in
the high-degree regime if the average degree is higher than $\Delta$.
Otherwise, we say that we are in the low-degree regime.  In the high
degree-regime, use \prettyref{prop:high-degree-case} to find an even
independent set $S$ of size at least $\Omega(\sqrt{m\Delta})$.  In the
low-degree regime, we invoke \prettyref{lem:odd-independent-set} to
find an odd independent set $S$ of size at least $\widetilde
\Omega(m/\Delta^{1/3})$.  Setting $\Delta = m^{3/5}$ implies that the
independent set we find has size at least $m^{4/5}$.
Finally, by \autoref{prop:blum-progress} we have the desired bound on
the number of colors used.
\end{proof}

\section{Combinatorial Rounding for Unbalanced Vertices}\label{sec:combRound}

In this section, we prove \prettyref{lem:unbalancedPartialColoring}.
In other words, we show that for any $\eps>0$,
\prettyref{alg:combinatorial-rounding} outputs a partial LO coloring
using $O\paren{\log \paren{\frac{1}{\eps}}}$ colors so that all the
unbalanced vertices are assigned a color.

\unbalancedPartialChi*

To prove this lemma, we give an algorithm, which given the value
$\{\gamma_v\}$ for each vertex $v$ (from 
\prettyref{sdp:lo_coloring}), is then combinatorial.  
The algorithm also takes as input
the value of $\eps$, which is the parameter
we use to define $\eps$-balanced.

\begin{algorithm}
\caption{Combinatorial Rounding}
\label{alg:combinatorial-rounding}
Input: {A 2-LO colorable 3-uniform hypergraph $H=(V, E)$, $\eps > 0$, the
values $\{\gamma_a\}$ for all $a \in V$ and set $C$ of linearly
ordered colors.}

Output: A partial LO coloring of all unbalanced vertices in $V$.
\begin{enumerate}
\item Set $j \assign 0, \ell_0 \assign -1, u_0 \assign 1, I_0 \assign [\ell_0,u_0]$.
\item \label{step:preprocessing} While $I_j\nsubseteq [-1/3-\eps,
  -1/3+\eps]$ do:
\begin{enumerate}
    \item If $j$ is even then set $I_{j+1}$ to the lower half of
      $I_j$, if $j$ is odd then set $I_{j+1}$ to be the upper half of
      $I_j$.  More precisely, set 
    \begin{align*}
        \ell_{j+1} \assign 
        \begin{cases}
        \frac{\ell_j+u_j}{2}&j\text{ is odd}\\
        \ell_j&\text{otherwise}
        \end{cases}
        &&
        u_{j+1} \assign 
        \begin{cases}
        \frac{\ell_j+u_j}{2}&j\text{ is even}\\
        u_j&\text{otherwise}
        \end{cases}
    \end{align*}
    and set $I_{j+1} \assign [\ell_{j+1},u_{j+1}]$.
    \item Set $S_{j+1} \assign \set{a\in V| \gamma_a\in I_j\setminus I_{j+1}}$ and color $S_{j+1}$ using the largest unused color from $C$.
    \item Set $j \assign j+1$.
\end{enumerate}

\end{enumerate}
\end{algorithm}

We will use the following observation.
\begin{observation}
\label{obs:vect-prop}
For any $\set{a,b,c} \in E$, we have
$ \gamma_a + \gamma_b + \gamma_c = -1 $. 
\end{observation}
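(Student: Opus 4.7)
The plan is to derive the identity by taking the inner product of the SDP edge constraint with the distinguished vector $\vectone$. Recall that \autoref{sdp:lo_coloring} enforces, for every edge $\set{a,b,c}\in E$, the vector equation $\vect{a}+\vect{b}+\vect{c} = -\vectone$, together with the unit-norm constraint $\norm{\vectone}^2 = 1$. These are exactly the ingredients I expect to use.

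First I would fix an arbitrary edge $\set{a,b,c}\in E$ and take the inner product of both sides of the constraint $\vect{a}+\vect{b}+\vect{c}=-\vectone$ with $\vectone$. Using bilinearity of the inner product on the left gives
\[
\inprod{\vect{a},\vectone} + \inprod{\vect{b},\vectone} + \inprod{\vect{c},\vectone} = \inprod{-\vectone,\vectone} = -\norm{\vectone}^2.
\]
Then I would substitute the unit-norm constraint $\norm{\vectone}^2=1$ on the right and the definition $\gamma_v \defeq \inprod{\vect{v},\vectone}$ on the left to conclude that $\gamma_a+\gamma_b+\gamma_c = -1$, which is exactly the claimed identity.

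There is essentially no obstacle: the observation is an immediate consequence of feasibility of the SDP solution, and the proof is a one-line bilinear manipulation. The only thing worth flagging is to be explicit that this argument uses both constraints of \autoref{sdp:lo_coloring} — the sum constraint to produce the left-hand side and the norm constraint on $\vectone$ (with $\emptyset\in V\cup\set{\emptyset}$) to evaluate the right-hand side as $-1$.
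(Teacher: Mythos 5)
Your proposal is correct and matches the paper's own proof exactly: both take the inner product of the edge constraint $\vect{a}+\vect{b}+\vect{c}=-\vectone$ with $\vectone$, use bilinearity, and evaluate $\inprod{-\vectone,\vectone}=-1$ via the unit-norm constraint. Nothing further is needed.
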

\begin{proof}
From constraint \prettyref{eq:sdp-sum}, we get
$ \gamma_a + \gamma_b + \gamma_c = \inprod{\vect{a} + \vect{b} + \vect{c}, \vectone} = \inprod{-\vectone,\vectone} = - 1.$\end{proof}

On a high level, the algorithm partitions the interval $[-1,1]$ and
assigns colors to vertices depending on where their corresponding
$\gamma_a$ values fall in this interval.  For example, in the first
iteration of the algorithm, we set $S_1$ to contain all vertices whose
$\gamma_a$ values fall into the interval $(0,1]$.  Notice that 
by \prettyref{obs:vect-prop}, at most one vertex from an edge will
qualify.  Now, all remaining vertices have $\gamma_a$ values in the
interval $[-1,0]$.
Next, we consider all vertices whose $\gamma_a$ values fall
into the interval $[-1,-1/2)$.  Again, an edge with all three values
  in $[-1,0]$ can not have more than one vertex with $\gamma_a$ value
  in $[-1,-1/2)$, and so on.  We now formally analyze the algorithm.

\begin{lemma}\label{lem:bounds}
For even $j\geq 2$, the interval $[\ell_j, u_j]$ is
$$ \left[\frac{-(2^{j-1}-2)/3-1}{2^{j-1}}, \frac{-(2^{j-1}-2)/3}{2^{j-1}} \right].$$
For odd $j\geq 1$, the interval $[\ell_j, u_j]$ is
$$ \left[\frac{-(2^{j-1}-1)/3-1}{2^{j-1}}, \frac{-(2^{j-1}-1)/3}{2^{j-1}} \right].$$
\end{lemma}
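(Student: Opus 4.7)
The plan is to prove the lemma by induction on $j$ using the recursion given in Step~\ref{step:preprocessing}(a) of \prettyref{alg:combinatorial-rounding}. Starting from $[\ell_0, u_0]=[-1,1]$, since $j=0$ is even the recursion yields $[\ell_1, u_1]=[-1,0]$, which matches the odd-$j$ formula evaluated at $j=1$; then since $j=1$ is odd the recursion yields $[\ell_2, u_2]=[-1/2,0]$, matching the even-$j$ formula at $j=2$. These serve as the base cases.

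For the inductive step I would split according to the parity of $j$. If $j$ is even, then by the recursion $\ell_{j+1}=\ell_j$ and $u_{j+1}=(\ell_j+u_j)/2$. Rewriting the claimed even-$j$ formula over the common denominator $3 \cdot 2^{j-1}$ gives $\ell_j = \frac{-2^{j-1}-1}{3 \cdot 2^{j-1}}$ and $u_j = \frac{-2^{j-1}+2}{3 \cdot 2^{j-1}}$. A short algebraic manipulation then shows that $\ell_j$ can equivalently be written as $\frac{-(2^{j}-1)/3 - 1}{2^{j}}$ (by multiplying numerator and denominator by $2$) and that $(\ell_j+u_j)/2$ simplifies to $\frac{-(2^{j}-1)/3}{2^{j}}$, which is precisely the odd-$j$ formula evaluated at $j+1$.

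If instead $j$ is odd, then by the recursion $\ell_{j+1}=(\ell_j+u_j)/2$ and $u_{j+1}=u_j$. Starting from the odd-$j$ formula one writes $\ell_j = \frac{-2^{j-1}-2}{3 \cdot 2^{j-1}}$ and $u_j = \frac{-2^{j-1}+1}{3 \cdot 2^{j-1}}$, and an analogous calculation shows that $u_j$ equals $\frac{-(2^{j}-2)/3}{2^{j}}$ while $(\ell_j+u_j)/2$ equals $\frac{-(2^{j}-2)/3 - 1}{2^{j}}$, matching the even-$j$ formula at $j+1$.

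I do not foresee any conceptual obstacle; the whole argument is a routine two-case induction. The only mild piece of bookkeeping is that the parity of the index flips at every step, so in each sub-case one must invoke the formula appropriate to the parity of $j$ on the left-hand side and verify the formula of the opposite parity at $j+1$. Once the expressions are put over the common denominator $3 \cdot 2^{j-1}$, this is straightforward.
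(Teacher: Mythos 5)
Your proposal is correct and is essentially the paper's own proof: an induction on $j$ with base cases $[\ell_1,u_1]=[-1,0]$ and $[\ell_2,u_2]=[-1/2,0]$, followed by a parity-split inductive step in which the recursion from Step~\ref{step:preprocessing} is applied to the formula of one parity to recover the formula of the opposite parity at $j+1$. The algebra you sketch (putting everything over the denominator $3\cdot 2^{j-1}$) checks out and matches the computations in the paper.
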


\begin{proof}
For $j=1$ the interval is $[-1,0]$ and $j=2$ the interval is $[-1/2,0]$.  For odd $j$, we have
$$
\ell_{j+1} = \frac{\ell_j + u_j}{2} =  \frac{-(2^{j-1}-1)/3-1 -(2^{j-1}-1)/3}{2^j} \\
=  \frac{-(2^{j}-2)/3-1}{2^j},
$$
and
$$ u_{j+1} = \frac{-(2^{j-1}-1)/3}{2^{j-1}} = \frac{-(2^{j}-2)/3}{2^{j}}.$$
For even $j$, we have

$$
u_{j+1} = \frac{\ell_j + u_j}{2} =  \frac{-(2^{j-1}-2)/3-1 -(2^{j-1}-2)/3}{2^j} \\
=  \frac{-(2^{j}-1)/3}{2^j},$$
and
$$ \ell_{j+1} = \frac{-(2^{j-1}-2)/3-1}{2^{j-1}} = \frac{-(2^{j}-4)/3-2}{2^{j}} =
\frac{(-2^{j}+1-3)/3}{2^{j}} = \frac{-(2^{j}-1)/3-1}{2^{j}}
.$$\end{proof}

As a consequence of \autoref{lem:bounds}, we immediately get a bound on the number of iterations in form of \autoref{cor:preprocessing-time}.

\begin{corollary}\label{cor:preprocessing-time}
For $j\geq \log{(\frac{4}{3\eps})}$, we have $I_j\subseteq [-1/3-\eps, -1/3+\eps]$.
\end{corollary}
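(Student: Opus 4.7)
The plan is to derive the corollary directly from the explicit endpoint formulas of \autoref{lem:bounds} by rewriting each endpoint as $-1/3$ plus a term that decays geometrically in $j$. Once that rewriting is done, it becomes an easy comparison with $\eps$ to read off the threshold $j \geq \log(4/(3\eps))$.

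Concretely, first I would take the even-$j$ formula from \autoref{lem:bounds} and simplify:
\begin{align*}
\ell_j &= \frac{-(2^{j-1}-2)/3 - 1}{2^{j-1}} = -\frac{1}{3} - \frac{1}{3 \cdot 2^{j-1}}, \\
u_j &= \frac{-(2^{j-1}-2)/3}{2^{j-1}} = -\frac{1}{3} + \frac{2}{3 \cdot 2^{j-1}}.
\end{align*}
Similarly, for odd $j$ I would simplify to
\begin{align*}
\ell_j &= -\frac{1}{3} - \frac{2}{3 \cdot 2^{j-1}}, \qquad
u_j = -\frac{1}{3} + \frac{1}{3 \cdot 2^{j-1}}.
\end{align*}
In both parities, each endpoint differs from $-1/3$ by at most $\frac{2}{3 \cdot 2^{j-1}} = \frac{1}{3 \cdot 2^{j-2}}$, so $I_j \subseteq \left[-\frac{1}{3} - \frac{1}{3 \cdot 2^{j-2}},\ -\frac{1}{3} + \frac{1}{3 \cdot 2^{j-2}}\right]$.

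Then I would impose the condition $\frac{1}{3 \cdot 2^{j-2}} \leq \eps$, which rearranges to $2^j \geq \frac{4}{3\eps}$, i.e., $j \geq \log\bigl(\frac{4}{3\eps}\bigr)$. This yields $I_j \subseteq [-1/3 - \eps, -1/3 + \eps]$, as required. The only real subtlety is handling the even/odd cases uniformly; the uniform bound $\frac{2}{3 \cdot 2^{j-1}}$ takes care of this, since the maximum deviation on the ``longer side'' occurs in both cases and drives the threshold. This step is routine algebra rather than a true obstacle, which is why it appears as a corollary.
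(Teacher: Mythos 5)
Your proof is correct and follows essentially the same route as the paper: both rewrite the endpoints from \autoref{lem:bounds} as $-1/3$ plus a deviation of at most $\frac{2}{3\cdot 2^{j-1}} = \frac{4}{3\cdot 2^j}$ and then solve $\frac{4}{3\cdot 2^j}\leq \eps$ for $j$. No differences worth noting.
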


\begin{proof}
By \autoref{lem:bounds} we have the following bounds on $I_j$.
For even $j\geq 2$, the interval $I_j = [\ell_j, u_j]$ is
$$ \left[-\frac{1}{3} - \frac{1}{3\cdot 2^{j-1}}, -\frac{1}{3} + \frac{2}{3 \cdot 2^{j-1}}\right].$$
For odd $j\geq 1$, the interval $I_j = [\ell_j, u_j]$ is
$$ \left[-\frac{1}{3} - \frac{2}{3\cdot 2^{j-1}}, -\frac{1}{3} + \frac{1}{3 \cdot 2^{j-1}}\right].$$
Setting $\eps = \frac{1}{3\cdot 2^{j-2}}= \frac{4}{3 \cdot 2^j}$, we have $I_j \subseteq [-\frac{1}{3} - \eps, - \frac{1}{3} + \eps]$.  Thus, $j = \log{(\frac{4}{3 \eps})}$.
\end{proof}

In \autoref{lem:reduction-independence} we show that in each iteration \prettyref{alg:combinatorial-rounding} colors an odd independent set.
\autoref{lem:reduction-independence} also follows from \autoref{lem:bounds}.
\begin{lemma}\label{lem:reduction-independence}
For each $j\geq 0$, let $H_j=(S_j,E_j)$ be a hypergraph with $E_j =
\set{e \in E : e \subseteq S_j}$.  Then for any $j\geq 0$, the set
$S_{j+1}$ is an odd independent set (i.e., we have $\card{S_{j+1}\cap
  e}\leq 1$ for any $e\in E_j$).
\end{lemma}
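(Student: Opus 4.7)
The plan is to argue by contradiction. Fix an edge $e = \{a,b,c\} \in E_j$, so that $\gamma_a, \gamma_b, \gamma_c \in I_j$ by definition of $E_j$, and suppose two of its vertices, say $a$ and $b$, lie in $S_{j+1}$, i.e., $\gamma_a, \gamma_b \in I_j \setminus I_{j+1}$. Using \prettyref{obs:vect-prop}, which gives $\gamma_a + \gamma_b + \gamma_c = -1$, I would combine the identity $\gamma_c = -1 - \gamma_a - \gamma_b$ with the explicit descriptions of $I_j$ and $I_{j+1}$ from \prettyref{lem:bounds} to force $\gamma_c$ outside $I_j$, contradicting $c \in S_j$.

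The intuition is that the intervals $I_j$ have length $1/2^{j-1}$ and are ``nearly centered'' at $-1/3$, while $I_j \setminus I_{j+1}$ is exactly one half of $I_j$ lying on a fixed side of its midpoint (the upper half when $j$ is even and the lower half when $j$ is odd, per the construction in \prettyref{alg:combinatorial-rounding}). Since $-1 = 3 \cdot (-1/3)$, having two of the three $\gamma$-values strictly on the same side of the midpoint forces the third to over-compensate by the sum of their offsets, which exceeds what $I_j$ can accommodate. Concretely, for even $j \geq 2$, \prettyref{lem:bounds} gives $I_j \setminus I_{j+1} = (-1/3 + 1/(3 \cdot 2^j),\, u_j]$; if $\gamma_a, \gamma_b > -1/3 + 1/(3 \cdot 2^j)$, then
\[
\gamma_c \;<\; -1 - 2\paren{-1/3 + 1/(3 \cdot 2^j)} \;=\; -1/3 - 1/(3 \cdot 2^{j-1}) \;=\; \ell_j,
\]
the desired contradiction. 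The odd case is symmetric, and the base case $j=0$ is immediate: $I_0 \setminus I_1 = (0,1]$, and two $\gamma$-values above $0$ force $\gamma_c < -1$.

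The only obstacle I anticipate is purely book-keeping, namely to track which endpoint of $I_j$ is included versus excluded when passing to $I_{j+1}$ so that the strict/non-strict inequalities line up and the contradiction is strict. This follows mechanically from the recursion in \prettyref{alg:combinatorial-rounding} and the closed-form expressions in \prettyref{lem:bounds}.
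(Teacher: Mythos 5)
Your proposal is correct and follows essentially the same route as the paper's proof: assume two vertices of an edge land in $S_{j+1}$, use $\gamma_a+\gamma_b+\gamma_c=-1$ together with the explicit intervals from \prettyref{lem:bounds} to push $\gamma_c$ strictly past the relevant endpoint of $I_j$ (your threshold $-1/3+1/(3\cdot 2^j)$ is exactly the paper's $u_{j+1}$, and the identity $-1-2u_{j+1}=\ell_j$ is the same computation). The parity bookkeeping, the strictness of the inequalities, and the base case $j=0$ all check out.
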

\begin{proof}
Let $\set{a,b,c}\in E_j$. Suppose $a, b \in S_{j+1}$.
If $j$ is odd, then $\gamma_a, \gamma_b\in [\ell_j, \ell_{j+1})$. Therefore, we get
$\gamma_a+\gamma_b<2 \ell_{j+1}$.
This implies, by \prettyref{obs:vect-prop}, $\gamma_c > -1 - 2 \ell_{j+1}$.
Therefore, we have
\begin{align*}
\gamma_c > -1-2 \ell_{j+1} &= -1 -2\left(\frac{-(2^{j}-2)/3-1}{2^j}\right) =
-1 +2\left(\frac{(2^{j}-2)/3+1}{2^j}\right) \\
& = \frac{-3 \cdot 2^j +2(2^{j}-2)+6}{3 \cdot 2^j} =
\frac{-2^j +2}{3 \cdot 2^j} = u_j,
\end{align*}
which is a contradiction since $\gamma_c \in [\ell_j, u_j]$.

Similarly, if $j$ is even, then $\gamma_a, \gamma_b\in (u_{j+1}, u_{j}]$ as $a,b \in S_{j+1}$.
Therefore, we get
$\gamma_a+\gamma_b>2u_{j+1}$.
This implies, by \prettyref{obs:vect-prop}, $\gamma_c < -1 - 2u_{j+1}$.
Therefore, we have 
\begin{align*}
\gamma_c < -1-2u_{j+1} &= \frac{- 3 \cdot 2^j + 2 \cdot 2^j-2}{3 \cdot 2^j} =
\frac{- 2^j -2}{3 \cdot 2^j} = \ell_j, 
\end{align*}
which is again a contradiction to the fact that $\gamma_c \in [\ell_j, u_j]$.
\end{proof}

\begin{proof}[Proof of \autoref{lem:unbalancedPartialColoring}]
By \prettyref{cor:preprocessing-time},
\prettyref{alg:combinatorial-rounding} runs for
$O\paren{\log\paren{\frac{1}{\eps}}}$ iterations.  In each iteration,
it uses exactly one color, which yields the stated bound on the number
of colors used.
To show that the output coloring is a partial LO coloring, we apply
\prettyref{lem:reduction-independence}, which states that each color
corresponds to an odd independent set.

Now, we need to show that any
edge with at least one colored vertex will have a unique maximum color.
Consider such an edge $e=\set{a, b, c}$.
If only one vertex in $e$ is colored, then we are done.
First, assume exactly two vertices in $e$ (say $a,b$) were colored.
Let $a$ be colored in the  $j_a$-th iteration and $b$ be colored in the $j_b$-th iteration.
Assume (without loss of generality) that $j_a\geq j_b$. Then, by
\autoref{lem:reduction-independence} we have $j_a\ne j_b$ (i.e., $j_a>j_b$).
As the color used in the iteration $j$ is the $j$-th largest color in $C$ (by a simple induction) color assigned to $a$ is strictly larger than the color assigned to $b$.
Finally, if all the vertices, $a, b, c$ were colored at iterations $j_a\geq j_b \geq j_c $, respectively.
Then, again by the same arguments we have $j_a>j_b$ and $j_a>j_c$, so the maximum color is assigned to only $a$.
\end{proof}

\section{SDP Rounding for Balanced Hypergraphs}\label{sec:sdpRound}

In this section we 
show that \prettyref{alg:sdp-random-projection}
outputs an odd independent set in an $\eps$-balanced 2-LO colorable
3-uniform hypergraph $H_B = (V_B, E_B)$.
Thus, we will prove \autoref{lem:odd-independent-set}.
\oddIndependentSet*

Recall that we have a solution for the SDP \ref{sdp:lo_coloring}.  Let
$\vectu{a}$ be the unit vector along the component orthogonal to
$\vect{\emptyset}$ (if the orthogonal component is zero, then we
define $\vectu{a}$ to be any arbitrarily chosen unit
vector). Therefore,
\begin{equation}
\vectu{a} = \frac{\vect{a} - \gamma_a \vectone}
	{\norm{\vect{a} - \gamma_a \vectone}}
	=  \frac{\vect{a} - \gamma_a\vectone}
	{\sqrt{\norm{\vect{a}}^2 + \gamma_a^2 - 2 \gamma_a \inprod{\vect{a},\vectone}}}
	= \frac{\vect{a} - \gamma_a\vectone}
	{\sqrt{1 - \gamma_a^2}}.
\end{equation}
Let function $\bar{\Phi} : \R \to [0,1]$ be defined as $\gcap{t} \defeq 
\Prob{g \sim \cN(0,1)}{g \geq t}$.

\begin{algorithm}[H]
\caption{Randomized Rounding}
\label{alg:sdp-random-projection}
Input: $H_B$ a $\eps$-balanced 2-LO colorable 3-uniform hypergraph and
a parameter $\alpha$ (see \prettyref{lem:independent-set-size} for
values of $\eps$ and $\alpha$ to be used).

Output: An odd independent set.
\begin{enumerate}
	\item Let $t$ be such that $\alpha = \gcap{t}$.
	\item \label{step:rounding_step} Sample $g \sim \cN\paren{0,1}^{|V_B|}$ and set
	$S(t) := \set{a \in V_B : \inprod{\vectu{a},g} \geq t}$.
	\item Set $S'(t) := S(t) \setminus \paren{\displaystyle \bigcup_{\substack{e \in E_B \\ \Abs{e \cap S(t)} \geq 2 }} e}$.
	\item Output $S'(t)$. 
\end{enumerate}

\end{algorithm}

In case of an edge $\set{a, b, c}$ with perfectly balanced vertices
(i.e., if we have $\gamma_a=\gamma_b=\gamma_c=-1/3$), one can observe
that the component orthogonal to $\vect{\emptyset}$ of the 
corresponding vectors sum to 0 (i.e., we have
$\vectu{a}+\vectu{b}+\vectu{c}=0$).  In \autoref{lem:u_bound} we show
a generalization of this observation for an $\eps$-balanced
hypergraph.  Recall that in an $\eps$-balanced hypergraph, we have
$\gamma_a \in [-1/3-\eps, -1/3+\eps]$ for each vertex.
The proof of the next lemma can be found in \prettyref{app:omitted}.
\begin{lemma}
\label{lem:u_bound}
Let $\set{a, b, c}$ be an edge in an $\eps$-balanced hypergraph $H_B$.
Then $\norm{\vectu{a}+\vectu{b}+\vectu{c}}^2\leq 18\eps$.
\end{lemma}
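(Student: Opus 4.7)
The plan is to reduce the bound to a clean identity that holds \emph{exactly} in the perfectly balanced case, and then to control the error due to the $\eps$-perturbation. Write $s_i \defeq \sqrt{1-\gamma_i^2}$, so that by construction $s_i \vectu{i} = \vect{i} - \gamma_i \vectone$. Summing this over $i \in \{a,b,c\}$ and using the edge constraint \prettyref{eq:sdp-sum} together with \prettyref{obs:vect-prop} yields the key identity
\[
    s_a \vectu{a} + s_b \vectu{b} + s_c \vectu{c} \;=\; \sum_{i \in \{a,b,c\}} (\vect{i} - \gamma_i \vectone) \;=\; -\vectone - (-1)\vectone \;=\; 0 .
\]
Note that this identity uses nothing about $\eps$ and by itself explains why the sum vanishes in the perfectly balanced case.

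Next, let $\bar s \defeq \sqrt{8/9}$, the common value every $s_i$ would take if the vertices were perfectly balanced. The identity above lets me rewrite, for any scalar $\kappa$, $\kappa \sum_i \vectu{i} = \sum_i (\kappa - s_i) \vectu{i}$; taking $\kappa = \bar s$, using $\|\vectu{i}\| = 1$, and applying the triangle inequality gives
\[
    \bar s \cdot \big\|\vectu{a} + \vectu{b} + \vectu{c}\big\| \;\leq\; \sum_{i \in \{a,b,c\}} |\bar s - s_i|.
\]
So the task reduces to bounding each $|\bar s - s_i|$ in terms of $\eps$.

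For that, I would factor $s_i^2 - \bar s^2 = \tfrac{1}{9} - \gamma_i^2 = (\tfrac{1}{3} - \gamma_i)(\tfrac{1}{3} + \gamma_i)$ and invoke the $\eps$-balanced assumption $|\gamma_i + \tfrac{1}{3}| \leq \eps$, together with $|\gamma_i - \tfrac{1}{3}| \leq \tfrac{2}{3} + \eps$, to obtain $|s_i^2 - \bar s^2| \leq \eps(\tfrac{2}{3} + \eps)$. Dividing by $s_i + \bar s \geq \bar s$ gives $|\bar s - s_i| = O(\eps)$. Plugging this back into the previous display and squaring produces $\|\vectu{a} + \vectu{b} + \vectu{c}\|^2 = O(\eps^2)$, which is comfortably bounded above by $18\eps$ for any $\eps$ small enough (certainly for $\eps = 1/n^{100}$).

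I do not foresee any serious obstacle here; the clean identity $\sum_i s_i \vectu{i} = 0$ in the first step is doing essentially all of the work, and the remainder is elementary bookkeeping. The only mild point requiring care is verifying the explicit constant $18$, which amounts to tracking the prefactors in Steps 2--3; since $\eps$ is taken to be extremely small, any constant in front of $\eps^2$ is easily absorbed into the claimed bound.
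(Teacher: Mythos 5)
Your proposal is correct, and it takes a genuinely different route from the paper. The paper works pairwise: it first computes $\inprod{\vect{a},\vect{b}} = -1/3+\eps_c$ (and cyclic variants) from the SDP constraints, then grinds through ratio bounds to show each $\inprod{\vectu{a},\vectu{b}} \leq -\tfrac{1}{2}(1-6\eps)$, and finally expands $\norm{\vectu{a}+\vectu{b}+\vectu{c}}^2 = 3 + 2\sum\inprod{\cdot,\cdot} \leq 18\eps$. Your argument instead isolates the exact identity $s_a\vectu{a}+s_b\vectu{b}+s_c\vectu{c}=0$ (which holds with no balancedness assumption at all, and remains valid even in the degenerate case $s_i=0$ where $\vectu{i}$ is chosen arbitrarily, since then $s_i\vectu{i}=0=\vect{i}-\gamma_i\vectone$) and treats the claim as a perturbation of the weights $s_i$ around $\bar s=\sqrt{8/9}$. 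This is cleaner, avoids the inner-product computations of \prettyref{lem:eps_vectors} entirely, and in fact yields the stronger bound $O(\eps^2)$ rather than $O(\eps)$. The one point to make explicit is that $O(\eps^2)\leq 18\eps$ needs $\eps$ below a fixed constant (tracking your constants, $\eps\leq 1/3$ amply suffices), so your proof establishes the lemma as stated only for such $\eps$; but the paper's own proof has the same implicit restriction (e.g.\ its numerator bound is only negative for $\eps<4/15$), and the lemma is only ever invoked with $\eps=1/n^{100}$, so this is immaterial.
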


When all the vertices in $\set{a, b, c}$ are perfectly balanced
then the event that both $a$ and $b$ belong to $S(t)$ is equivalent to 
$\inprod{\vectu{c}, g}\leq -2t$ as $\vectu{a}+\vectu{b}+\vectu{c}=0$.
Therefore, we can use bounds on Gaussians to bound the probability of the aforementioned event.
Again, \autoref{lem:correlation-bound} generalizes this to $\eps$-balanced vector for small enough $\eps$.
\begin{lemma}
\label{lem:correlation-bound}
Take $\eps=\frac{1}{|V_B|^{100}}$ and
let $a,b$ be adjacent vertices in $H_B$.
Then 
\[
\Prob{}{a \in S(t) \land b \in S(t)} \leq \gcap{2t} + \frac{2}{|V_B|^{25}}.
\]
\end{lemma}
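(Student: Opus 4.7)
\medskip

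\noindent\textbf{Proof plan for Lemma \ref{lem:correlation-bound}.}
The plan is to reduce the event $\{a \in S(t) \land b \in S(t)\}$ to a Gaussian tail event on a single vector, using the third vertex $c$ guaranteed by adjacency. Since $a$ and $b$ are adjacent in the $3$-uniform hypergraph $H_B$, there is a vertex $c$ with $\{a,b,c\}\in E_B$. Define the error vector $w \defeq \vectu{a}+\vectu{b}+\vectu{c}$. By \prettyref{lem:u_bound}, we have $\|w\|^2 \leq 18\eps$. If both $a,b\in S(t)$, then summing $\inprod{\vectu{a},g}\geq t$ and $\inprod{\vectu{b},g}\geq t$ gives $\inprod{\vectu{a}+\vectu{b},g}\geq 2t$, i.e.\ $\inprod{-\vectu{c}+w, g}\geq 2t$, so
\[
\inprod{\vectu{c},g} \;\leq\; -2t + \inprod{w,g}.
\]

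\noindent The next step is to control the perturbation $\inprod{w,g}$. Since $g\sim \cN(0,\mathbb{1})^{|V_B|}$, the random variable $\inprod{w,g}$ is distributed as $\cN(0,\|w\|^2)$ with $\|w\|^2\leq 18\eps$. Picking a small threshold $s > 0$ (to be chosen), a standard Gaussian tail bound yields
\[
\Prob{}{\inprod{w,g} > s} \;\leq\; \exp\!\Paren{-\tfrac{s^2}{2\|w\|^2}} \;\leq\; \exp\!\Paren{-\tfrac{s^2}{36\eps}}.
\]
A union bound then gives
\[
\Prob{}{a\in S(t)\land b\in S(t)} \;\leq\; \Prob{}{\inprod{\vectu{c},g}\leq -2t + s} + \exp\!\Paren{-\tfrac{s^2}{36\eps}}.
\]
Because $\vectu{c}$ is a unit vector, $\inprod{\vectu{c},g}\sim\cN(0,1)$, so the first probability equals $\gcap{2t-s}$.

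\noindent The final step is to bound $\gcap{2t-s}-\gcap{2t}$ and the exponential error. Using the mean value theorem with $|\gcap{}'(x)|\leq 1/\sqrt{2\pi}$, we have $\gcap{2t-s}\leq \gcap{2t} + s/\sqrt{2\pi}$. With $\eps = |V_B|^{-100}$, choose $s = |V_B|^{-30}$. Then $s/\sqrt{2\pi} \leq |V_B|^{-25}$ for large $|V_B|$, and $s^2/(36\eps) = |V_B|^{40}/36$, so $\exp(-s^2/(36\eps))$ is negligible (certainly at most $|V_B|^{-25}$). Combining these bounds yields $\gcap{2t} + 2/|V_B|^{25}$, as claimed.

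\noindent The main (minor) obstacle is getting the constants right so that the two error terms together fit under $2/|V_B|^{25}$; this is purely a matter of choosing $s$ so that $s/\sqrt{2\pi}$ and $\exp(-s^2/(36\eps))$ are each at most $|V_B|^{-25}$, which is easy given the large polynomial slack $\eps = |V_B|^{-100}$ provides. Everything else is a short chain of: (i) use adjacency to pick $c$, (ii) apply \prettyref{lem:u_bound}, (iii) rewrite the event as a Gaussian tail on $\vectu{c}$ up to the perturbation $\inprod{w,g}$, and (iv) apply standard Gaussian tail estimates.
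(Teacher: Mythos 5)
Your proof is correct and follows essentially the same route as the paper: pick the third vertex $c$ of the edge, use \prettyref{lem:u_bound} to control $w = \vectu{a}+\vectu{b}+\vectu{c}$, reduce the event to a tail bound on $\inprod{\vectu{c},g}$ shifted by the perturbation $\inprod{w,g}$, and absorb the shift via the Lipschitz bound on $\bar{\Phi}$ (\prettyref{fact:gaussian-concentration}). The only difference is how the perturbation is controlled: the paper conditions on $\norm{g}\leq |V_B|^{25}$ and uses Cauchy--Schwarz plus a Markov bound on $\norm{g}^2$, whereas you use the exact fact that $\inprod{w,g}\sim \cN(0,\norm{w}^2)$ together with a one-dimensional Gaussian tail bound, which is slightly cleaner and makes that error term exponentially rather than polynomially small.
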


\begin{proof}
Suppose $e = \set{a,b,c}$ is an edge in $H_B$ containing both $a$ and $b$.
If both $a$ and $b$ belong to $S(t)$, then $\inprod{\vectu{a},g} \geq t$
and $\inprod{\vectu{b},g} \geq t$.
Note that $\norm{\vectu{a} + \vectu{b} + \vectu{c}} \leq 3\sqrt{2\eps}$ by \prettyref{lem:u_bound}.
If we additionally assume that $\norm{g}\leq |V_B|^{25}$ (this assumption is violated with low probability) we have
\begin{align*}
	3\sqrt{2\eps}|V_B|^{25} &\geq \inprod{\vectu{a}+\vectu{b}+\vectu{c},g}&\text{(Cauchy-Schwarz)}\\
	&= \inprod{\vectu{a},g}+\inprod{\vectu{b}, g}+\inprod{\vectu{c}, g}\\
	&\geq 2t+\inprod{\vectu{c}, g}&\paren{\inprod{\vectu{a}, g}\geq t \text{ and }\inprod{\vectu{b}, g}\geq t}
\end{align*}
we get $\inprod{\vectu{c}, g}\leq -2t + 3\sqrt{2\eps}|V_B|^{25}$.
Thus,
we can upper bound $\Prob{}{\paren{\inprod{\vectu{a},g} \geq t} \land \paren{\inprod{\vectu{b},g} \geq t} \land \paren{\norm{g}\leq |V_B|^{25}}}$
by
\begin{align*}
&\Prob{}{\paren{\inprod{\vectu{c},g} \leq -2t+3\sqrt{2\eps}|V_B|^{25}} \land \paren{\norm{g}\leq |V_B|^{25}}} \\
&\leq \Prob{}{\inprod{\vectu{c},g} \leq -2t+3\sqrt{2\eps}|V_B|^{25}} \\
&= \gcap{2t-3\sqrt{2\eps}|V_B|^{25}}\\
&\leq \gcap{2t}+\sqrt{\eps}|V_B|^{25}&(\text{\prettyref{fact:gaussian-concentration}})\\
&\leq \gcap{2t}+\frac{1}{|V_B|^{25}}.&\paren{\eps=\frac{1}{|V_B|^{100}}}
\end{align*}
Now, in the following step we look at the case when the assumption $\norm{g}\leq |V_B|^{25}$ is violated.
\begin{align*}
\Prob{}{\paren{\inprod{\vectu{a},g} \geq t} \land \paren{\inprod{\vectu{b},g} \geq t} \land \paren{\norm{g}> |V_B|^{25}}}&\leq \Prob{}{\norm{g}> |V_B|^{25}}\\
&\leq \Prob{}{\norm{g}^2 > |V_B|^{50}}\\
&\leq \frac{1}{|V_B|^{49}}&\paren{\Ex{}{\norm{g}^2}=|V_B|\text{ and Markov bound}}
\end{align*}
Adding up the two disjoint cases we get the required bound.
\end{proof}

\begin{lemma}
\label{lem:independent-set-size}
Let $\Delta\geq 4$ be an upper bound on the average degree of a vertex in $H_B$ (i.e.,
$\card{E_B}\leq \frac{\Delta \card{V_B}}{3}$).
Take $\alpha =
\frac{1}{32} \frac{1}{\Delta^{\frac{1}{3}}(\ln \Delta)^{1/2}}$
and $\eps = \frac{1}{|V_B|^{100}}$. 
Then, we have
$
  \Ex{}{|S'(t)|} \geq \frac{3}{4}\alpha \card{V_B}.
$
\end{lemma}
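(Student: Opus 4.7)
The plan is to proceed by linearity of expectation in two steps. First I would compute $\Ex{}{|S(t)|}$ exactly: since each $\vectu{a}$ is a unit vector, $\inprod{\vectu{a},g}$ is a standard Gaussian and $\Pr[a\in S(t)] = \gcap{t} = \alpha$, so
\[
\Ex{}{|S(t)|} \;=\; \alpha\,|V_B|\mper
\]
So it suffices to show $\Ex{}{|S(t)\setminus S'(t)|} \leq \frac{1}{4}\alpha|V_B|$.

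Next I would bound the expected number of removed vertices. A vertex is removed only if it belongs to some edge $e$ with $|e\cap S(t)|\geq 2$; each such ``bad'' edge contributes at most $3$ removed vertices. By a union bound over the three pairs in an edge together with \prettyref{lem:correlation-bound},
\[
\Pr\big[|e\cap S(t)|\geq 2\big] \;\leq\; 3\Paren{\gcap{2t} + \tfrac{2}{|V_B|^{25}}}\mper
\]
Summing over edges and using $|E_B|\leq \Delta|V_B|/3$ gives
\[
\Ex{}{|S(t)\setminus S'(t)|} \;\leq\; 3\sum_{e\in E_B}\Pr\big[|e\cap S(t)|\geq 2\big] \;\leq\; 3\Delta|V_B|\Paren{\gcap{2t}+\tfrac{2}{|V_B|^{25}}}\mper
\]
The additive term $6\Delta|V_B|^{-24}$ is negligible (absorbable into lower-order error since $\alpha|V_B|\geq|V_B|/|V_B|^{o(1)}$ when $\Delta\leq|V_B|$), so the main task reduces to showing $3\Delta\,\gcap{2t}\leq \frac{1}{4}\alpha$.

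For the key Gaussian computation I would use the standard two-sided tail bound $\frac{t}{(t^2+1)\sqrt{2\pi}}e^{-t^2/2}\leq \gcap{t}\leq \frac{1}{t\sqrt{2\pi}}e^{-t^2/2}$. This gives
\[
\frac{\gcap{2t}}{\gcap{t}} \;\leq\; \frac{(t^2+1)}{2t^2}\,e^{-3t^2/2} \;\leq\; e^{-3t^2/2}
\]
for $t\geq 1$. With $\alpha = \tfrac{1}{32}\Delta^{-1/3}(\ln\Delta)^{-1/2}$, inverting the tail bound one finds $t^2 = \tfrac{2}{3}\ln\Delta + O(\ln\ln\Delta)$ (the factor $1/32$ and the $(\ln\Delta)^{-1/2}$ are chosen precisely so that $\gcap{t}=\alpha$ holds with this $t$). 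Substituting yields $e^{-3t^2/2}\leq \Delta^{-1}\cdot(\text{small constant})$, from which $3\Delta\,\gcap{2t}\leq \tfrac{1}{4}\alpha$ after checking the constants. The $\Delta\geq 4$ hypothesis ensures $\ln\Delta$ is bounded away from zero and $t\geq 1$, justifying the asymptotic form of the tail bound.

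Combining the two estimates gives $\Ex{}{|S'(t)|}\geq \alpha|V_B| - \tfrac{1}{4}\alpha|V_B| - o(1) \geq \tfrac{3}{4}\alpha|V_B|$. The main obstacle is the last step — not the calculation itself, which is routine Gaussian tail bookkeeping, but rather verifying that the explicit constant $1/32$ in the definition of $\alpha$ is large enough to absorb both the factor $3$ from the union bound over pairs in an edge and the factor from the ratio $\gcap{2t}/\gcap{t}$. Careful tracking of constants in the two-sided tail bound should make this work.
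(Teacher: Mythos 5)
Your proposal is correct and follows essentially the same route as the paper: linearity of expectation for $\Ex{}{|S(t)|}=\alpha|V_B|$, a union bound over the three pairs in each edge combined with \prettyref{lem:correlation-bound}, and a Gaussian tail comparison showing $\gcap{2t}\ll\gcap{t}/\Delta$ under the chosen $\alpha$. The paper packages your ratio bound $\gcap{2t}/\gcap{t}\leq e^{-3t^2/2}$ as \prettyref{cor:gcap-twice-t-bound} (namely $\gcap{2t}\leq 512(\ln(1/\gcap{t}))^{3/2}\gcap{t}^4$, proved from the same two-sided tail estimates via \prettyref{cor:inv-gcap-log}), and then does exactly the constant-tracking you defer, using $\ln(1/\alpha)\leq 4\ln\Delta$ for $\Delta\geq 4$.
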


\begin{proof}
To lower bound the expected size of $S'(t)$ we lower bound the expected size of $S(t)$
and upper bound the expected number of vertices participating
in a bad edge (i.e., an edge $e$ such that $|e\cap S(t)|\geq 2$) separately.

First, we lower bound the size of $|S(t)|$ as follows. 
\begin{align*}
\Ex{}{\card{S(t)}}=\sum_{a\in V_B}\Prob{}{\inprod{\vectu{a},g}\geq t}=\alpha \card{V_B}.
\end{align*}

Now, to get an upper bound we note that each bad edge can contribute at most 3 vertices in the total number of vertices participating in some bad edge.
Formally, we have the following. 
\begin{align*}
\card{\bigcup_{\substack{e \in E_B \\ \Abs{e \cap S(t)} \geq 2 }}e} &\leq \sum_{\substack{e \in E_B \\ \card{e \cap S(t)} \geq 2 }}|e|&\text{(Union Bound)}\\
&\leq 3\card{\set{e \in E_B \text{ s.t. }\card{e \cap S(t)} \geq 2 }}&(|e|=3)
\end{align*}

If an edge $\set{a,b,c}$ is bad, i.e., we have $|\set{a, b, c}\cap S(t)|\geq 2$, then either $\set{a, b}\subseteq S(t)$ or $\set{a, c}\subseteq S(t)$ or $\set{b, c}\subseteq S(t)$.
Therefore, by union bound $\Ex{}{\card{\set{e \in E_B \text{ s.t. }\card{e \cap S(t)} \geq 2 }}}$ is at most
\begin{align*}
& \sum_{\set{a,b,c}\in E_B}
\paren{\Prob{}{a\in S(t) \land b\in S(t)} + \Prob{}{a\in S(t) \land c\in S(t)} + \Prob{}{c\in S(t) \land b\in S(t)}}\\
& \leq \sum_{e\in E_B} 3\cdot \paren{\gcap{2t} + \frac{2}{|V_B|^{25}}}\\
& = 3 |E_B| \cdot \gcap{2t}+\frac{6|E_B|}{|V_B|^{25}},
\end{align*}
where the second inequality follows from \autoref{lem:correlation-bound}.
Let us now upper bound the first term as follows.
\begin{align*}
	3|E_B|\gcap{2t} & \leq \Delta |V_B| \cdot 512 \gcap{t}^{4} \cdot \paren{\ln({1/\gcap{t}})}^{3/2}&\paren{|E_B|\leq \frac{\Delta |V_B|}{3}\text{ and \prettyref{cor:gcap-twice-t-bound}}}\\
	& \leq \Delta |V_B| \cdot 512 \alpha^{4} \cdot \paren{\ln({1/\alpha})}^{3/2}&(\gcap{t}=\alpha)\\
	&\leq \frac{1}{8}\alpha|V_B|\paren{\frac{\ln (1/\alpha)}{4\ln \Delta}}^{3/2} &(\text{Substituting }\alpha)\\
	&\leq \frac{1}{8}\alpha |V_B|&(\Delta\geq 4)
\end{align*}
Note that in the first inequality above we could use 
\autoref{cor:gcap-twice-t-bound} as $\Delta\geq 4$ implies
$t\geq 1$.
It is easy to show that $\frac{6|E_B|}{|V_B|^{25}}\leq \frac{1}{8}\alpha|V_B|$.
Therefore, we get 
\begin{align*}
\Ex{}{\card{\set{e \in E_B \text{ s.t. }\card{e \cap S(t)} \geq 2 }}} \leq \frac{1}{4} \alpha |V_B|.
\end{align*}
Thus, by combining the two bounds we get that
\begin{equation*}
\Ex{}{\card{S'(t)}} = \Ex{}{\card{S(t)}} - \Ex{}{\card{\bigcup_{\substack{e \in E_B \\ \Abs{e \cap S(t)} \geq 2 }}e}}
 \geq \paren{1-\frac{1}{4}}\alpha \card{V_B} \geq \frac{3}{4}\alpha \card{V_B}.
\end{equation*}
\end{proof}

\begin{proof}[Proof of \autoref{lem:odd-independent-set}]
This follows from \prettyref{lem:independent-set-size} and 
the proof is standard Markov bound followed by an 
amplification argument where you repeat 
\prettyref{alg:sdp-random-projection} polynomially many times
and choose the best odd independent set among all repetitions.
The probability of even the best odd independent set not being
of the required size is then inverse exponential with respect
number of iterations.
We refer the reader to Section 13.2 of \cite{WS11} for further reference.
\end{proof}

\paragraph*{A Better SDP Rounding.}
Here we note that there is in fact a better way to round the SDP in the balanced case, which follows from \cite{hmnz24} and essentially reduces the balanced case to the unbalanced case.  Let $H = (V,E)$ be an $\eps$-balanced hypergraph on $n$ vertices.  Recall that $\eps \leq 1/n^{100}$.

As in \prettyref{alg:sdp-random-projection}, we sample a gaussian $g
\sim \cN\paren{0,1}^{n}$.  For each (unit) vector $\vectu{a}$ for $a
\in V$, let $\zeta_a = \inprod{\vectu{a}, g}$. Observe that $|\zeta_a| \in [1/n^2, n^2]$ with probability $1-O(1/n^2)$. 
Now for all $a \in V$, set $\zeta_a' = \zeta_a/n^2$ and set $\gamma_a'
= \gamma_a + \zeta'_a$.
Thus, with 
probability at
least (roughly) $1-O(1/n)$, for {\em all} vertices $a \in V$, we have $$|\zeta'_a| \in
[1/n^4, 1]~ \text{ and }~ \gamma'_a \notin (-1/3- 1/n^{100}, -1/3 + 1/n^{100}).$$
Since for every hyperedge $\set{a,b,c} \in E$, we have $\zeta'_a +
\zeta'_b + \zeta'_c = 0$ (because $\vectu{a} + \vectu{b} + \vectu{c} =
     {\bf{0}}$, which implies $\inprod{\vectu{a},g} +
     \inprod{\vectu{b},g} + \inprod{\vectu{c},g} = 0$).

Then for every hyperedge $\set{a,b,c} \in E$,
we have $\gamma_a' + \gamma_b' + \gamma_c' = -1$.  
Thus, we can run
\prettyref{alg:combinatorial-rounding} on the inputs $\{\gamma_a'\}_{a \in V}$
and $\eps = 1/n^{100}$.  By
\prettyref{lem:unbalancedPartialColoring}, it will output an
LO-coloring of $H$ using at most $O(\log{\frac{1}{\eps}})$ colors.

\section{Conclusion}

We have presented an improved bound on the number of colors needed to
efficiently LO color a 2-LO colorable 3-uniform hypergraph, and
demonstrated that SDP-based rounding methods can indeed be applied to
LO coloring.  A natural question is if we can do better than $O(\log{n})$ colors in the balanced case; this might be a step towards improving on the bound of $O(\log{n})$ colors for the general case given in \cite{hmnz24}.


\bibliographystyle{alpha}
\bibliography{ref.bib}

\appendix

\section{Properties of Gaussian}
Let function $\Phi : \R \to [0,1]$ be defined as $\ggcap{t} \defeq 
\Prob{g \sim \cN(0,1)}{g \leq t}$, and
let function $\bar{\Phi} : \R \to [0,1]$ be defined as $\gcap{t} \defeq 
\Prob{g \sim \cN(0,1)}{g \geq t}$.

\begin{fact}
	\label{fact:gaussian-concentration}
	For any $a\leq b$, we have $\gcap{b}-\gcap{a}=\Prob{g\sim \cN(0,1)}{g\in [a, b]}\leq \frac{b-a}{\sqrt{2\pi}}$.
\end{fact}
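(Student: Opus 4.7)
The plan is a one-line calculus estimate: bound the Gaussian density pointwise by its global maximum and integrate. Concretely, the density $\phi(x) = \frac{1}{\sqrt{2\pi}} e^{-x^2/2}$ is unimodal with peak $\phi(0) = \frac{1}{\sqrt{2\pi}}$, so for any $a \leq b$ I would write
\[
\Prob{g \sim \cN(0,1)}{g \in [a,b]} \;=\; \int_a^b \phi(x)\,dx \;\leq\; \phi(0)\,(b-a) \;=\; \frac{b-a}{\sqrt{2\pi}}.
\]

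To match the form in the Fact, I would use $\gcap{t} = \Prob{g \sim \cN(0,1)}{g \geq t}$, which is a tail probability, so that $\gcap{a} - \gcap{b} = \Prob{g \sim \cN(0,1)}{g \in [a,b]}$ whenever $a \leq b$. This identifies the middle expression in the statement with the probability in the display above, and the claimed inequality then follows at once. (The statement as written has $\gcap{b} - \gcap{a}$, which is the negative of the probability; the intended reading is the one above.)

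There is no substantive obstacle: the entire argument reduces to knowing the closed-form location of the mode of $\phi$, and critically, the resulting bound $(b-a)/\sqrt{2\pi}$ is \emph{location-independent}, depending only on the interval length. That is precisely the feature exploited in \prettyref{lem:correlation-bound}, where the inequality is applied to a shift of size $3\sqrt{2\eps}|V_B|^{25}$ around an arbitrary threshold $2t$. Sharper Gaussian tools such as Mills' ratio or concentration inequalities would of course give a smaller (exponentially decaying) bound when $|a|, |b|$ are both large, but they are not required for the present application.
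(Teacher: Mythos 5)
Your proof is correct and is essentially identical to the paper's: both bound the density by its maximum value $\phi(0)=1/\sqrt{2\pi}$ and integrate over $[a,b]$. Your observation that the statement as printed should read $\gcap{a}-\gcap{b}$ (not $\gcap{b}-\gcap{a}$) for $a\leq b$ is also correct --- the paper's own proof silently computes $\Pr[g\in[a,b]]$ without ever invoking the sign as written.
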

\begin{proof}
	The statement follows from the following computations.
	\begin{align*}
		\Prob{}{g\in [a,b]} = \int_{a}^{b}\frac{e^{-x^2/2}}{\sqrt{2\pi}} \, dx\leq \frac{1}{\sqrt{2\pi}}\int_{a}^{b}\sup_{y\in \mathbb R}e^{-y^2/2}\, dx=\frac{b-a}{\sqrt{2\pi}}.
	\end{align*}
\end{proof}

\begin{fact}[Folklore]
\label{fact:gcap}
For every $t > 0$,
\[ \frac{t}{\sqrt{2 \pi } (t^2 + 1) } e^{- \frac{1}{2} t^2} < \gcap{t} 
	< \frac{1}{\sqrt{2 \pi }t} e^{- \frac{1}{2} t^2} . \]
\end{fact}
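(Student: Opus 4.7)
The plan is to prove both bounds by working directly with the integral representation
\[
\gcap{t} \;=\; \frac{1}{\sqrt{2\pi}}\int_t^\infty e^{-x^2/2}\,dx.
\]
The upper bound follows from a one-line monotonicity trick: since $x \geq t > 0$ on the region of integration, we have $e^{-x^2/2} \leq (x/t)\, e^{-x^2/2}$, and the latter integrates explicitly to $e^{-t^2/2}/t$ (via the substitution $u = x^2/2$). Dividing by $\sqrt{2\pi}$ yields the stated upper bound $\gcap{t} < \frac{1}{\sqrt{2\pi}\, t}\, e^{-t^2/2}$, where strictness comes from the fact that $x/t > 1$ on $(t,\infty)$.

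For the lower bound, I would perform a single integration by parts, which is the classical Mills-ratio argument. Writing the integrand as $\frac{1}{x}\cdot x e^{-x^2/2}$ and taking $u = 1/x$, $dv = x e^{-x^2/2}\,dx$ (so $du = -dx/x^2$ and $v = -e^{-x^2/2}$), the boundary term is $e^{-t^2/2}/t$ at $x=t$ and vanishes as $x\to\infty$, producing the identity
\[
\int_t^\infty e^{-x^2/2}\, dx \;=\; \frac{e^{-t^2/2}}{t} \;-\; \int_t^\infty \frac{e^{-x^2/2}}{x^2}\, dx.
\]
On $[t,\infty)$ we have $1/x^2 \leq 1/t^2$, so the remainder integral is at most $\frac{1}{t^2}\int_t^\infty e^{-x^2/2}\, dx$. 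Substituting this bound back into the identity and collecting the tail integral on the left yields
\[
\paren{1 + \tfrac{1}{t^2}} \int_t^\infty e^{-x^2/2}\, dx \;\geq\; \frac{e^{-t^2/2}}{t},
\]
which rearranges to $\int_t^\infty e^{-x^2/2}\, dx \geq \frac{t\, e^{-t^2/2}}{t^2 + 1}$. Dividing by $\sqrt{2\pi}$ completes the proof.

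Neither step is technically demanding: the upper bound is a single comparison, and the lower bound is a one-step integration by parts combined with the crude estimate $1/x^2 \leq 1/t^2$, which is just tight enough to yield the claimed denominator $t^2 + 1$. The only points requiring mild care are verifying that the boundary contribution at infinity vanishes (which follows from $e^{-x^2/2}/x \to 0$) and tracking strict versus non-strict inequalities; the strict inequalities demanded by the statement come from the strict inequalities $x/t > 1$ and $1/x^2 < 1/t^2$ on $(t,\infty)$, together with positivity of the Gaussian tail.
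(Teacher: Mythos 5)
Your proof is correct. The paper itself states \prettyref{fact:gcap} as folklore and gives no proof, so there is nothing to compare against; your argument is the standard one (upper bound by the comparison $e^{-x^2/2} < (x/t)\,e^{-x^2/2}$ on $(t,\infty)$, lower bound by one integration by parts plus the estimate $1/x^2 \leq 1/t^2$, i.e.\ the classical Mills-ratio derivation), and both steps, including the handling of the boundary term at infinity and the strictness of the inequalities, check out.
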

\begin{corollary}[Folklore]
\label{cor:inv-gcap-log}
Fix $t\geq 1$ and let $\beta = \gcap{t}$.  Then we have 
\begin{equation*}
    \sqrt{2\ln \frac{1}{\beta}- \ln \ln \frac{1}{\beta}-\ln 16\pi}\leq t \leq \sqrt{2\ln \frac{1}{\beta}- \ln \ln \frac{1}{\beta}}\leq \sqrt{2\ln \frac{1}{\beta}}.
\end{equation*}
In fact, $t < \sqrt{2\ln \frac{1}{\beta}}$ holds even if $t\in (0,1)$.
\end{corollary}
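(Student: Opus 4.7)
The plan is to derive both bounds by taking logarithms of the Mills-ratio estimates in \prettyref{fact:gcap} and iterating once to control the $2\ln t$ term that appears on the right-hand side.

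For the upper bound on $t$, starting from $\beta < \frac{1}{\sqrt{2\pi}\, t}\, e^{-t^2/2}$ and taking logs gives the implicit inequality $t^2 < 2\ln(1/\beta) - 2\ln t - \ln(2\pi)$. Since $t \geq 1$, the term $-2\ln t$ is non-positive, so first of all $t < \sqrt{2\ln(1/\beta)}$. Substituting this crude bound back controls $2\ln t \leq \ln 2 + \ln\ln(1/\beta)$, whence $t^2 \leq 2\ln(1/\beta) - \ln\ln(1/\beta) - \ln(4\pi)$; dropping the positive $\ln(4\pi)$ gives the middle inequality in the corollary, and the rightmost inequality is immediate. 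For the lower bound I would apply the same strategy to the other half of \prettyref{fact:gcap}: using $\beta > \frac{t}{\sqrt{2\pi}(t^2+1)}e^{-t^2/2}$ together with $t^2 + 1 \leq 2t^2$ (valid since $t \geq 1$) gives $t^2 > 2\ln(1/\beta) - 2\ln t - \ln(8\pi)$, and inserting the just-derived upper bound to control $2\ln t$ from above yields $t > \sqrt{2\ln(1/\beta) - \ln\ln(1/\beta) - \ln(16\pi)}$, as claimed.

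The parenthetical statement that $t < \sqrt{2\ln(1/\beta)}$ persists even for $t \in (0,1)$ cannot be obtained from \prettyref{fact:gcap} directly, because the $1/t$ factor blows up as $t \to 0^{+}$. For this range I would instead invoke the standard Chernoff-style estimate $\gcap{t} \leq \tfrac{1}{2}\, e^{-t^2/2}$ valid for all $t \geq 0$ (provable in one line by a translation $x \mapsto x+t$ in the Gaussian integral, or by applying the Gaussian MGF). This directly gives $t^2 \leq 2\ln(1/\beta) - 2\ln 2 < 2\ln(1/\beta)$ and closes the last case.

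The whole argument is essentially bookkeeping of constants through a single iteration of the implicit inequalities, so I anticipate no substantive obstacle. The only care needed is to plug in the correct crude upper bound for $t$ when bounding $2\ln t$; otherwise the $\ln\ln(1/\beta)$ term picks up a wrong multiplicative or additive constant and the claimed bounds will not line up.
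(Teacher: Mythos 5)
Your overall strategy (take logarithms of the two Mills-ratio estimates in \prettyref{fact:gcap} and iterate once to control the residual $2\ln t$ term) is the same as the paper's, and two of your three steps are sound. Your derivation of the lower bound $t>\sqrt{2\ln(1/\beta)-\ln\ln(1/\beta)-\ln 16\pi}$ is correct, and your handling of $t\in(0,1)$ via the Chernoff-type bound $\gcap{t}\le\tfrac12 e^{-t^2/2}$ is not only correct but actually repairs a step the paper glosses over: the paper's inequality $t^2+2\ln(\sqrt{2\pi}\,t)\ge t^2$ fails for $t<1/\sqrt{2\pi}$, so the paper's own justification of ``$t<\sqrt{2\ln(1/\beta)}$ for all $t>0$'' has a small hole that your argument fills.

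There is, however, a genuine logical gap in your derivation of the middle inequality $t\le\sqrt{2\ln(1/\beta)-\ln\ln(1/\beta)}$. From $t^2< 2\ln(1/\beta)-2\ln t-\ln(2\pi)$ you invoke the \emph{upper} bound $2\ln t\le\ln 2+\ln\ln(1/\beta)$ and conclude $t^2\le 2\ln(1/\beta)-\ln\ln(1/\beta)-\ln(4\pi)$. But an upper bound on $2\ln t$ makes the right-hand side $2\ln(1/\beta)-2\ln t-\ln(2\pi)$ \emph{larger}; from ``$t^2<B-2\ln t$'' and ``$2\ln t\le g$'' one cannot infer ``$t^2\le B-g$'' --- that inference would require $2\ln t\ge g$. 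To push a $-\ln\ln(1/\beta)$ correction into an upper bound on $t^2$ you need a \emph{lower} bound on $2\ln t$, equivalently an upper bound of the form $\ln(1/\beta)\le O(t^2)$, and this can only come from the \emph{lower} bound on $\gcap{t}$ in \prettyref{fact:gcap}, which your argument for this inequality never uses. This is exactly how the paper proceeds: it subtracts the logarithm of $2\ln(1/\beta)<t^2+2\ln\paren{\sqrt{2\pi}\,\frac{t^2+1}{t}}$ from $2\ln(1/\beta)>t^2+2\ln(\sqrt{2\pi}\,t)$ and verifies the elementary claim $4\pi t^2\ge t^2+2\ln\paren{\sqrt{2\pi}\,\frac{t^2+1}{t}}$ so that the leftover logarithmic term is non-negative. (Your lower-bound derivation is the mirror image, where an upper bound on $2\ln t$ \emph{is} the right direction --- which is likely where the sign confusion crept in.) The fix is short, but as written the middle inequality is not established.
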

\begin{proof}
  Let $t > 0$ (note here we allow $t\in (0,1)$) and let $\beta = \gcap{t}$.
By taking logarithm and multiplying by $-2$, the inequalities in \prettyref{fact:gcap} imply
\begin{align}
2\ln \paren{\frac{1}{\beta}} &> t^2 + 2\ln\paren{\sqrt{2\pi} t}, \label{eq:loglower}\\
2\ln \paren{\frac{1}{\beta}} &< t^2 + 2\ln\paren{\sqrt{2\pi} \paren{\frac{t^2+1}{t}}}\label{eq:logupper}.
\end{align}
We can now use \prettyref{eq:loglower} to get
\begin{align*}
2\ln \paren{\frac{1}{\beta}} &> t^2 + 2\ln\paren{\sqrt{2\pi} t} \geq t^2.
\end{align*}
Hence, we have $t < \sqrt{2\ln \frac{1}{\beta}}$ for any $t > 0$.
Again, by multiplying by $\frac{1}{2}$ and taking logarithms, the \prettyref{eq:loglower}, \prettyref{eq:logupper} imply
\begin{align}
\ln \ln \paren{\frac{1}{\beta}} &> \ln \paren{t^2/2 + \ln\paren{\sqrt{2\pi} t}}, \label{eq:logloglower}\\
\ln \ln \paren{\frac{1}{\beta}} &< \ln \paren{t^2/2 + \ln\paren{\sqrt{2\pi} \paren{\frac{t^2+1}{t}}}}. \label{eq:loglogupper}
\end{align}
From hereon we assume $t\geq 1$. \prettyref{eq:loglower} $-$ \prettyref{eq:loglogupper} gives us
\begin{equation}
\label{eq:pre-upper}
2\ln \paren{\frac{1}{\beta}} - \ln \ln \paren{\frac{1}{\beta}} > t^2 +
\ln\paren{
    \frac{2\pi t^2}
    {
        t^2/2 + \ln\paren{\sqrt{2\pi} \paren{\frac{t^2+1}{t}}}
    }
}
=
t^2 +
\ln\paren{
    \frac{4\pi t^2}
    {
        t^2 + 2\ln\paren{\sqrt{2\pi} \paren{\frac{t^2+1}{t}}}
    }
}
\end{equation}
\begin{claim}
$4\pi t^2 \geq t^2 + 2\ln\paren{\sqrt{2\pi} \paren{\frac{t^2+1}{t}}}$.
\end{claim}
\begin{proof}
Note that the above inequality is equivalent to $\paren{\frac{4\pi-1}{2}} t^2 \geq \ln \sqrt{2\pi} + \ln \paren{t+\frac{1}{t}}$.
Indeed we have
\begin{align*}
\ln\sqrt{2\pi} + \ln \paren{t+\frac{1}{t}}&\leq \ln\sqrt{2\pi} + \ln (t+1)&(t\geq 1)\\
&\leq \ln \sqrt{2\pi} + t &(\ln (1+x)\leq x)\\
&\leq \ln \sqrt{2\pi} + t^2&(t\geq 1)\\
&\leq \paren{\frac{4\pi-3}{2}} + t^2& \paren{\frac{4\pi - 3}{2}\geq \ln \sqrt{2\pi}}\\
&\leq \paren{\frac{4\pi-1}{2}}t^2& \paren{t\geq 1}
\end{align*}
\end{proof}
Using this claim and \prettyref{eq:pre-upper} we get
\begin{equation*}
t^2 \leq 2\ln \frac{1}{\beta}- \ln \ln \frac{1}{\beta}.
\end{equation*}
Hence, we have $t \leq \sqrt{2\ln \frac{1}{\beta}- \ln \ln \frac{1}{\beta}}$.
For the remaining inequality, we again see that \prettyref{eq:logupper} $-$ \prettyref{eq:logloglower} gives us
\begin{align*}
2\ln \frac{1}{\beta} - \ln \ln \frac{1}{\beta} &< t^2 + \ln \paren{\frac{4\pi\paren{t+\frac{1}{t}}}{t^2+2\ln(\sqrt{2\pi}t)}}\\
&\leq t^2 + \ln 4 \pi + \ln \paren{\frac{(t+1)^2}{t^2}}&\paren{t\geq 1}\\
&\leq t^2 + \ln 4 \pi + 2\ln \paren{1+\frac{1}{t}}\\
&\leq t^2 + \ln 4 \pi + 2\ln 2 = t^2 + \ln 16 \pi&(t\geq 1)
\end{align*}
$\sqrt{2\ln \frac{1}{\beta}- \ln \ln \frac{1}{\beta}-\ln 16\pi}\leq t$ follows from the above inequality.
Hence, we have all the required inequalities.
\end{proof}

\begin{corollary}[Folklore]
	\label{cor:gcap-twice-t-bound}
	Fix $t\geq 1$. Then, we have
	\begin{align*}
		\gcap{2t}\leq 512\paren{\ln \paren{\frac{1}{\gcap{t}}}}^{3/2}\gcap{t}^{4}.
	\end{align*}
\end{corollary}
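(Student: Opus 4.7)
}

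The plan is to directly use the two-sided bounds on $\gcap{\cdot}$ from \prettyref{fact:gcap}, together with \prettyref{cor:inv-gcap-log}, and observe that the Gaussian tails at $2t$ and at $t$ have exponents $-2t^2$ and $-t^2/2$, so when we form the ratio $\gcap{2t}/\gcap{t}^4$ the two exponential factors cancel exactly. Everything that remains is a rational function of $t$, which we then convert to the desired logarithmic quantity using the inequality $t \leq \sqrt{2\ln(1/\gcap{t})}$ from \prettyref{cor:inv-gcap-log}.

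More concretely, first apply the upper bound from \prettyref{fact:gcap} to obtain
\[
\gcap{2t} \;<\; \frac{1}{\sqrt{2\pi}\,(2t)}\, e^{-2t^2},
\]
and the lower bound from the same fact, raised to the fourth power, to obtain
\[
\gcap{t}^{4} \;>\; \left(\frac{t}{\sqrt{2\pi}\,(t^{2}+1)}\right)^{4} e^{-2t^{2}}.
\]
Dividing these two estimates (the $e^{-2t^2}$ factors cancel), I get
\[
\frac{\gcap{2t}}{\gcap{t}^{4}} \;<\; \frac{(2\pi)^{3/2}(t^{2}+1)^{4}}{2\,t^{5}}.
\]
Since $t\geq 1$, I can bound $(t^{2}+1)^{4} \leq (2t^{2})^{4} = 16\,t^{8}$, which simplifies the right-hand side to $8\,(2\pi)^{3/2}\,t^{3}$.

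Finally, I apply \prettyref{cor:inv-gcap-log} (valid for $t\geq 1$, and in fact the displayed upper bound $t<\sqrt{2\ln(1/\gcap{t})}$ holds whenever $t>0$) to get $t^{3} \leq (2\ln(1/\gcap{t}))^{3/2} = 2\sqrt{2}\,(\ln(1/\gcap{t}))^{3/2}$. Combining,
\[
\frac{\gcap{2t}}{\gcap{t}^{4}} \;<\; 16\sqrt{2}\,(2\pi)^{3/2}\,\bigl(\ln(1/\gcap{t})\bigr)^{3/2},
\]
and a crude numerical estimate $16\sqrt{2}\,(2\pi)^{3/2} < 512$ yields the claimed inequality. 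There is no real obstacle here: the whole argument is bookkeeping once one notices the exponent cancellation, and the constant $512$ is comfortably loose, so I can afford the crude polynomial bound $(t^{2}+1)^{4}\leq 16 t^{8}$ rather than anything tighter.
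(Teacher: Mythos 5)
Your proposal is correct and follows essentially the same route as the paper's proof: bound $\gcap{2t}$ from above and $\gcap{t}$ from below via \prettyref{fact:gcap} so that the $e^{-2t^2}$ factors cancel, reduce the leftover rational function to $O(t^3)$ using $t\geq 1$, and convert $t^3$ to $\bigl(2\ln(1/\gcap{t})\bigr)^{3/2}$ via \prettyref{cor:inv-gcap-log}, with the same final constant $16\sqrt{2}\,(2\pi)^{3/2}=(4\sqrt{\pi})^3<512$. Your version even carries the exponent $3/2$ correctly to the end, whereas the last two lines of the paper's displayed chain contain a typo writing $(\ln(1/\gcap{t}))^{3}$.
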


\begin{proof}
\label{app:proof-of-gcap-twice-t-bound}
For any $t\geq 1$ and $\delta \in (0,1)$ the following holds.
\begin{align*}
		\gcap{2t}&\leq \frac{1}{2\sqrt{2\pi}t}e^{-2t^2}&\paren{\text{\prettyref{fact:gcap}}}\\
		&\leq \frac{1}{2\sqrt{2\pi}t} \cdot \frac{{(2\pi)}^{2} \paren{t^2+1}^{4}}{t^{4}} \cdot \gcap{t}^{4}&\paren{\frac{t}{\sqrt{2 \pi } (t^2 + 1) } e^{- \frac{1}{2} t^2} \leq \gcap{t} \text{ by \prettyref{fact:gcap}}}\\
		&= (2\pi)^{3/2} \frac{1}{2t} \paren{t+\frac{1}{t}}^4   \gcap{t}^{4}\\
		&\leq (2\pi)^{3/2} (2t)^{3}   \gcap{t}^{4}&\paren{t\geq 1}\\
		&\leq (4\sqrt{\pi})^3\paren{ \ln \paren{ \frac{1}{\gcap{t}} } }^{3} \cdot \gcap{t}^{4}&(\text{by \prettyref{cor:inv-gcap-log}})\\
		&\leq 512\paren{\ln \paren{\frac{1}{\gcap{t}}}}^{3}\gcap{t}^{4}&(\sqrt{\pi}\leq 2).
	\end{align*}
\end{proof}

\section{Omitted Proofs}
\label{app:omitted}

\subsection{Proof of \prettyref{lem:u_bound}}
\label{app:proof-of-u-bound}
Before we proceed to prove \prettyref{lem:u_bound} we need the following lemma.
\begin{lemma}
\label{lem:eps_vectors}
Let $\set{a,b,c}\in E$ and $\gamma_a=-1/3+\eps_a$, $\gamma_b=-1/3+\eps_b$, $\gamma_c=-1/3+\eps_c$.
Then the following hold.
\begin{enumerate}
    \item \label{item:eps_sum} $\eps_a+\eps_b+\eps_c = 0$.
    \item \label{itm:innerprod} $\inprod{\vect{a}, \vect{b}} = -1/3+\eps_c$, $\inprod{\vect{b}, \vect{c}} = -1/3+\eps_a$, and $\inprod{\vect{c}, \vect{a}} = -1/3+\eps_b$.
\end{enumerate}
\end{lemma}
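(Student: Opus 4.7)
The plan is to prove the two items independently, using only the linear constraint \prettyref{eq:sdp-sum} from \prettyref{sdp:lo_coloring}, the unit-norm constraints on the vectors, and \prettyref{obs:vect-prop}.

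For item~\ref{item:eps_sum}, I would invoke \prettyref{obs:vect-prop} to write $\gamma_a + \gamma_b + \gamma_c = -1$, substitute the parametrization $\gamma_a = -1/3 + \eps_a$, $\gamma_b = -1/3 + \eps_b$, $\gamma_c = -1/3 + \eps_c$, and simplify; the $-1/3$'s sum to $-1$, leaving $\eps_a + \eps_b + \eps_c = 0$.

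For item~\ref{itm:innerprod}, the idea is to isolate one vector in the SDP constraint and take squared norms. Rearranging \prettyref{eq:sdp-sum} as $\vect{a} + \vect{b} = -(\vectone + \vect{c})$, I would compute the squared norm of both sides. The left-hand side expands to $\norm{\vect{a}}^2 + \norm{\vect{b}}^2 + 2 \inprod{\vect{a}, \vect{b}} = 2 + 2 \inprod{\vect{a}, \vect{b}}$ by the unit-norm constraints, while the right-hand side expands to $\norm{\vectone}^2 + \norm{\vect{c}}^2 + 2 \inprod{\vectone, \vect{c}} = 2 + 2\gamma_c$ by the definition of $\gamma_c$. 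Equating gives $\inprod{\vect{a}, \vect{b}} = \gamma_c = -1/3 + \eps_c$. Applying the same argument with $b$ (respectively $a$) isolated yields $\inprod{\vect{c}, \vect{a}} = -1/3 + \eps_b$ and $\inprod{\vect{b}, \vect{c}} = -1/3 + \eps_a$.

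Both parts reduce to routine manipulations of the SDP constraints, so I do not expect any real obstacle; the lemma simply records the algebraic identities that will be needed to bound $\norm{\vectu{a} + \vectu{b} + \vectu{c}}^2$ in the proof of \prettyref{lem:u_bound} in \prettyref{app:proof-of-u-bound}. The only care needed is to make sure each of the three pairwise inner products (and not merely their sum, which would pop out of the alternative dot-product form of \prettyref{sdp:lo_coloring} in the footnote) is identified with the correct $\gamma$ value, which is guaranteed by the symmetry of the isolate-and-expand step.
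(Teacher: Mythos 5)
Your proof is correct, and item~\ref{item:eps_sum} is handled exactly as in the paper (via \prettyref{obs:vect-prop}). For item~\ref{itm:innerprod}, however, you take a genuinely different route. The paper takes the inner product of the edge constraint $\vect{a}+\vect{b}+\vect{c}=-\vectone$ with each of $\vect{a}$, $\vect{b}$, $\vect{c}$ in turn, obtaining a $3\times 3$ linear system in the pairwise inner products, and then solves it by the combination (first)$+$(second)$-$(third), invoking item~\ref{item:eps_sum} to simplify $-\eps_a-\eps_b+\eps_c$ to $2\eps_c$. You instead isolate one vector, write $\vect{a}+\vect{b}=-(\vectone+\vect{c})$, and equate squared norms: the unit-norm constraints collapse both sides to $2+2\inprod{\vect{a},\vect{b}}=2+2\gamma_c$. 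Your computation checks out ($\norm{-(\vectone+\vect{c})}^2=\norm{\vectone+\vect{c}}^2=2+2\inprod{\vectone,\vect{c}}=2+2\gamma_c$), and the two variants you get by symmetry correctly match each inner product to the $\gamma$ of the \emph{third} vertex. Your version is slightly slicker: it delivers each inner product in one step rather than by eliminating variables from a system, it makes the clean exact identity $\inprod{\vect{a},\vect{b}}=\gamma_c$ explicit (which is what the lemma is really recording), and it does not rely on item~\ref{item:eps_sum} at all, so the two items are logically independent in your write-up. The paper's linear-system route is equally elementary and is perhaps the more mechanical way to see that the system determines all three inner products; nothing of substance separates the two beyond presentation.
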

\begin{proof}
Using \prettyref{obs:vect-prop} we get
\begin{align*}
-1/3+\eps_a -1/3+\eps_b -1/3+\eps_c=-1,
\end{align*}
which implies $\eps_a+\eps_b+\eps_c = 0$.
Taking inner products with $\vect{a}, \vect{b}, \vect{c}$ on both sides of constraint \prettyref{eq:sdp-sum} of \prettyref{sdp:lo_coloring} we get
\begin{align*}
1 + \inprod{\vect{a},\vect{b}} + \inprod{\vect{a},\vect{c}} & = 1/3 - \eps_a,\\
\inprod{\vect{b},\vect{a}} + 1 + \inprod{\vect{b},\vect{c}} & = 1/3 - \eps_b,\\
\inprod{\vect{c},\vect{a}} + \inprod{\vect{c},\vect{b}} + 1 & = 1/3 - \eps_c,
\end{align*}
which imply
\begin{align}
\inprod{\vect{a},\vect{b}} + \inprod{\vect{a},\vect{c}} & = - (2/3 + \eps_a), \label{eq:epsa}\\
\inprod{\vect{b},\vect{a}} + \inprod{\vect{b},\vect{c}} & = - (2/3 + \eps_b), \label{eq:epsb}\\
\inprod{\vect{c},\vect{a}} + \inprod{\vect{c},\vect{b}} & = - (2/3 + \eps_c). \label{eq:epsc}
\end{align}
\prettyref{eq:epsa}$+$\prettyref{eq:epsb}$-$\prettyref{eq:epsc} gives us
\begin{equation*}
2\inprod{\vect{a},\vect{b}} = -2/3 - \eps_a - \eps_b + \eps_c.
\end{equation*}
Using \prettyref{item:eps_sum} of this lemma and dividing by 2 we get $\inprod{\vect{a}, \vect{b}} = -1/3+\eps_c$ as needed.
Similarly, we get $\inprod{\vect{a}, \vect{c}}=-1/3+\eps_b$, $\inprod{\vect{c}, \vect{b}}_c=-1/3+\eps_a$.
\end{proof}
\begin{proof}[Proof of \autoref{lem:u_bound}]
Note that
\begin{align*} 
\inprod{\vectu{a},\vectu{b}} & = \inprod{\frac{\vect{a} - \gamma_a\vectone}
{\sqrt{1 - \gamma_a^2}} ,\frac{\vect{b} - \gamma_b\vectone}{\sqrt{1 - \gamma_b^2}}}
= \frac{\inprod{\vect{a},\vect{b}} - \gamma_a \inprod{\vectone,\vect{b}}
- \gamma_b \inprod{\vectone,\vect{a}} + \gamma_a \gamma_b \inprod{\vectone,\vectone}}
{\sqrt{\paren{1 - \gamma_a^2}\paren{1 - \gamma_b^2}}} \\
& = \frac{\inprod{\vect{a},\vect{b}} - \gamma_a \gamma_b}
		{\sqrt{\paren{1 - \gamma_a^2}\paren{1 - \gamma_b^2}}} .\nonumber
\end{align*}
First let us upper-bound the denominator in the above expression using $\gamma_a, \gamma_b\in [-1/3-\epsilon, -1/3+\epsilon]$ as follows.
\begin{align*}
 \sqrt{\paren{1 - \gamma_a^2}\paren{1 - \gamma_b^2}}&\leq \sqrt{(1-(1/3-\epsilon)^2)(1-(1/3-\epsilon)^2)}\\
 &= \frac{8}{9}+\frac{2\epsilon}{3} - \epsilon^2\\
 &\leq \frac{8}{9}+\frac{2\epsilon}{3}.
\end{align*}
This implies that
\begin{align*}
\frac{1}{\sqrt{\paren{1 - \gamma_a^2}\paren{1 - \gamma_b^2}}}&\geq \frac{1}{\frac{8}{9}\paren{1+\frac{9\epsilon}{4}}}\\
&\geq \frac{9}{8}\paren{1-\frac{9\epsilon}{4}+\frac{\paren{\frac{9\epsilon}{4}}^2}{\paren{1+\frac{9\epsilon}{4}}}}\\
&\geq \frac{9}{8}\paren{1-\frac{9\epsilon}{4}}.
\end{align*}
By \prettyref{lem:eps_vectors}, we have $\inprod{\vect{a},\vect{b}}\in [-1/3-\epsilon, -1/3+\epsilon]$.
So, we can also bound the numerator in the expression for $\inprod{\vectu{a},\vectu{b}}$ by using the fact that $\inprod{\vect{a},\vect{b}},\gamma_a, \gamma_b \in [-1/3-\epsilon, -1/3+\epsilon]$ as follows.
\begin{align*}
\inprod{\vect{a},\vect{b}} - \gamma_a \gamma_b &\leq -\frac{1}{3}+\epsilon - \paren{\frac{1}{3}-\epsilon}^2\\
&= -\frac{4}{9} + \frac{5\epsilon}{3} - \epsilon^2\\
&\leq -\frac{4}{9} + \frac{5\epsilon}{3}.
\end{align*}
Therefore, we get
\begin{align*}
\inprod{\vectu{a},\vectu{b}}&\leq -\frac{4}{9}\paren{1-\frac{15\epsilon}{4}}\cdot\frac{9}{8}\paren{1-\frac{9\epsilon}{4}}\\
& \leq -\frac{1}{2}\paren{1-6\epsilon}.
\end{align*}
Finally, for the edge $\set{a, b, c}$ we get
	\begin{align*}
		\norm{\vectu{a}+\vectu{b}+\vectu{c}}^2 &= \norm{\vectu{a}}^2 + \norm{\vectu{b}}^2 + \norm{\vectu{c}}^2 + 2\inprod{\vectu{a},\vectu{b}} + 2\inprod{\vectu{b},\vectu{c}} + 2\inprod{\vectu{c},\vectu{a}}\\
		&= 3 + 2\paren{\inprod{\vectu{a},\vectu{b}} + \inprod{\vectu{b},\vectu{c}} + \inprod{\vectu{c},\vectu{a}}}\\
		&\leq 18\eps
	\end{align*}
	where the last inequality follows from the fact that $\inprod{\vectu{a},\vectu{b}}$, $\inprod{\vectu{b},\vectu{c}}$, and $\inprod{\vectu{c},\vectu{a}}$ are all at most $-\frac{1}{2}+3\eps$.
\end{proof}

\end{document}